\newtheorem{mydef}{Definition}
\newtheorem{myLemma}{Lemma}
\newtheorem{myThm}{Theorem}
\newtheorem{mycor}{Corollary}
\newcommand{\expr}{\mathit{e}} 
\newcommand{\comm}{\mathit{c}}
\newcommand{\pc}{\mathit{pc}} 
\newcommand{\lab}{\mathcal{A}}
\newcommand{\pl}{\ensuremath{{}^\star}\xspace}
\newcommand{\inference}[3][]{\text{\smaller #1}\inferrule{ #2}{#3}}
\begin{document}

% \special{papersize=8.5in,11in}
% \setlength{\pdfpageheight}{\paperheight}
% \setlength{\pdfpagewidth}{\paperwidth}

% \authorversion{2014}{PLAS '14}{July 29, 2014, Uppsala, Sweden}{10.1145/2637113.2637116}

% \conferenceinfo{PLAS '14}{July 29, 2014, Uppsala, Sweden} 
% \copyrightyear{2014} 
% \copyrightdata{978-1-nnnn-nnnn-n/yy/mm} 
% \doi{nnnnnnn.nnnnnnn}

% Uncomment one of the following two, if you are not going for the 
% traditional copyright transfer agreement.

%\exclusivelicense
% ACM gets exclusive license to publish, you retain copyright

%\permissiontopublish
% ACM gets nonexclusive license to publish (paid open-access papers, short abstracts)

%\titlebanner{Draft for review only.}        % These are ignored unless 
%\preprintfooter{}   % 'preprint' option specified.

\title{Generalizing Permissive-Upgrade in Dynamic Information Flow
  Analysis\thanks{This is an updated version of a paper of the same
    title published at the ACM Ninth Workshop on Programming Languages
    and Analysis for Security (PLAS), 2014
    (doi>10.1145/2637113.2637116). The update improves the
    permissiveness of the dynamic analysis of the original paper
    slightly.}}

%\subtitle{}

\author{\IEEEauthorblockN{Abhishek Bichhawat\IEEEauthorrefmark{1},
Vineet Rajani\IEEEauthorrefmark{2},
Deepak Garg\IEEEauthorrefmark{2},
Christian Hammer\IEEEauthorrefmark{1}}
\IEEEauthorblockA{\IEEEauthorrefmark{1}Saarland University, \{bichhawat,hammer\}@cs.uni-saarland.de}
\IEEEauthorblockA{\IEEEauthorrefmark{2}MPI-SWS, \{vrajani,dg\}@mpi-sws.org}}

\maketitle

\begin{abstract}
  Preventing implicit information flows by dynamic program analysis
  requires coarse approximations that result in false positives,
  because a dynamic monitor sees only the executed trace of the
  program. One widely deployed method is the no-sensitive-upgrade
  check, which terminates a program whenever a variable's taint is
  upgraded (made more sensitive) due to a control dependence on
  tainted data. Although sound, this method is restrictive, e.g., it
  terminates the program even if the upgraded variable is never used
  subsequently. To counter this, Austin and Flanagan introduced the
  permissive-upgrade check, which allows a variable upgrade due to
  control dependence, but marks the variable ``partially-leaked''. The
  program is stopped later if it tries to use the partially-leaked
  variable. Permissive-upgrade handles the dead-variable assignment
  problem and remains sound. However, Austin and Flanagan develop
  permissive-upgrade only for a two-point (low-high) security lattice
  and indicate a generalization to pointwise products of such
  lattices. In this paper, we develop a non-trivial and non-obvious
  generalization of permissive-upgrade to arbitrary lattices. The key
  difficulty lies in finding a suitable notion of partial leaks that
  is both sound and permissive and in developing a suitable definition
  of memory equivalence that allows an inductive proof of soundness.
\end{abstract}

%\category{CR-number}{subcategory}{third-level}

% general terms are not compulsory anymore, 
% you may leave them out
%\terms
%term1, term2

% \keywords Dynamic information flow control, permissive-upgrade,
% partial leak, security lattice

\IEEEpeerreviewmaketitle

%!TEX root = paper.tex

\section{Introduction}
\label{sec:intro}

Information flow control (IFC) is often used to enforce
confidentiality and integrity of data. In a language-based setting,
IFC may be enforced statically~\cite{denning77,Volpano96,hunt2006:types,
  popl99,guarnieri11ISSTA,stagedIFC}, dynamically~\cite{plas09,
  plas10, csf12, Askarov09,Sabelfeld10,SME}, or in hybrid
ways~\cite{POST14,Nentwich07,Gurvan06,Gurvan07}. We are particularly
interested in dynamic IFC and, more specifically, dynamic IFC for
JavaScript, which has features like runtime code construction and
runtime modification of scope chains that make static analysis
difficult.

Dynamic IFC usually works by tracking taints or labels on individual
program values in the language runtime. A label represents a mandatory
access policy on the value. For example, the label $L$ (low
confidentiality) conventionally means that data may be read by an
(unspecified but fixed) adversary and $H$ (high confidentiality) means
the opposite. More generally, labels may be drawn from any lattice of
policies, with higher labels representing more restrictive policies. A
value $v$ labeled $\lab$ is written $v^\lab$. IFC analysis
\emph{propagates} labels as data flows during program execution. Flows
are of two kinds. \emph{Explicit} flows are induced by expression
evaluation and variable assignment. For example, if either variable
$y$ or $z$ is labeled $H$ (confidential), then the result of computing
$y+z$ will have label $H$, which makes it confidential as
well.\footnote{By ``$z$ is labeled $H$'' we actually mean ``the value
  in $z$ is labeled $H$''. This convention is used consistently.}

\emph{Implicit} flows are induced by control flow dependencies. For
example, in the program of Listing~\ref{lst1}, the value in variable
$x$ at the end of line~\ref{lineref} depends on the value in $z$ (so
the value in $x$ at the end of line~\ref{lineref} must be labeled $H$
if the value in $z$ is confidential), but $x$ is never assigned any
expression that explicitly depends on $z$. To track such implicit
flows, dynamic IFC maintains an additional taint, usually called the
program counter taint or program context taint or $pc$, which is an
upper bound on the control dependencies that lead to the current
instruction being executed. In our example, if $z$ is labeled $H$,
then at line~\ref{lineref}, $pc = H$ because of the branch in
line~\ref{linerefcond1} that depends on $z$. By tracking $pc$, dynamic
IFC can enforce that $x$ has label $H$ at the end of
line~\ref{lineref}, thus taking into account the control dependency.

% An important decision in dynamic IFC is whether or not to allow the
% labels on variables to ``float'' or change as the program executes. If
% label float is disallowed (i.e., the label of each variable is fixed
% upfront), then the above mechanism with $pc$ can be shown sound for
% enforcing end-to-end information flow control, specifically, for the
% formal property of termination-insensitive
% noninterference~\cite{...}. In practice, fixing labels upfront is very
% restrictive when the program aggregates over a data structure whose
% individual elements have distinct but statically unpredictable
% policies, or when the lattice of labels is dynamic. Consequently,
% there is merit in allowing labels to float and most recent work on
% dynamic IFC allows such label float~\cite{plas09,csf12,POST14}.

However, simply tracking control flow dependencies via $pc$ is not
enough to guarantee absence of information flows when labels are
flow-sensitive, i.e., when the same variable may hold values with
different labels depending on what program paths are executed. The
program in Listing~\ref{lst1} is a classic counterexample, taken
from~\cite{plas09}. Assume that $z$ is labeled $H$ and $x$ and $y$ are
labeled $L$ initially. We compute the final value in $y$ as a function
of the value in $z$. If $z$ contains $\texttt{true}^H$, then $y$ ends
with $\texttt{true}^L$: The branch on line~\ref{linerefcond1} is not
taken, so $x$ remains $\texttt{false}^L$ at
line~\ref{linerefcond}. Hence, the branch on line~\ref{linerefcond} is
taken, but $pc = L$ at line~\ref{lineref5} and $y$ ends with
$\texttt{true}^L$. If $z$ contains $\texttt{false}^H$, then similar
reasoning shows that $y$ ends with $\texttt{false}^L$. Consequently,
in both cases $y$ ends with label $L$ and its value is exactly equal
to the value in $z$. Hence, an adversary can deduce the value of $z$
by observing $y$ at the end (which is allowed because $y$ ends with
label $L$). So, this program leaks information about $z$ despite
correct use of $pc$.

\begin{lstlisting}[float,caption=Implicit flow from $z$ to
  $y$,label=lst1][escapechar=@]
$x =$ false, $y =$ false
if (not$(z)$)@\label{linerefcond1}@
  $x =$ true@\label{lineref}@
if (not$(x)$)@\label{linerefcond}@
  $y =$ true@\label{lineref5}@
\end{lstlisting}

\begin{lstlisting}[float,label=lst1.1,caption=Impermissiveness of the NSU strategy][escapechar=@]
$x =$ false
if (not$(z)$)
  $x =$ true@\label{lineref1}@
if ($y$) f() else g()
$x =$ false@\label{lineref2}@
\end{lstlisting}

Handling such flows in dynamic IFC requires coarse approximation
because a dynamic monitor only sees program branches that are executed
and does not know what assignments may happen in alternate branches in
other executions. One such coarse approximation is the
\emph{no-sensitive-upgrade} (NSU) check proposed by
Zdancewic~\cite{zdancewic02PhD}. In the program in Listing~\ref{lst1},
we upgrade $x$'s label from $L$ to $H$ at line~\ref{lineref} in one of
the two executions above, but not the other. Subsequently, information
leaks in the other execution (where $x$'s label remains $L$) via the
branch on line~\ref{linerefcond}. The NSU check stops the leak by
preventing the assignment on line~\ref{lineref}. More generally, it
stops a program whenever a variable's label is upgraded due to a high
$pc$. This check suffices to provide \emph{termination-insensitive
  noninterference}, a standard security
property~\cite{Volpano96}. However, terminating a program
pre-emptively in this manner is quite restrictive in practice. For
example, consider the program of Listing~\ref{lst1.1}, where $z$ is
labeled $H$ and $y$ is labeled $L$. This program potentially upgrades
variable $x$ at line~\ref{lineref1} under a high $pc$, and then
executes function $\texttt{f}$ when $y$ is $\texttt{true}$ and
executes function $\texttt{g}$ otherwise. Suppose that $\texttt{f}$
does not read $x$. Then, for $y \mapsto \texttt{true}^L$, this program
leaks no information, but the NSU check would terminate this program
prematurely at line~\ref{lineref1}. (Note: $\texttt{g}$ may read $x$,
so $x$ is not a dead variable at line~\ref{lineref1}.)

% it is impermissive when the
% variable being assigned is not used later as, for instance, in
% line~\ref{lineref1} of the program in Listing~\ref{lst1.1}. This
% program leaks no information even if $z$ has label $H$ because the
% variable $x$ is overwritten in line~\ref{lineref2} independent of the
% value of $z$.

To allow a dynamic IFC to accept such safe executions of programs with
variable upgrades due to high $pc$, Austin and Flanagan proposed a
less restrictive strategy called
\emph{permissive-upgrade}~\cite{plas10}. Whereas NSU stops a program
when a variable's label is upgraded due to assignment in a high $pc$,
permissive-upgrade allows the assignment, but labels the variable
\emph{partially-leaked} or $P$. The taint $P$ roughly means that the
variable's content in this execution is $H$, but it may be $L$ in
other executions. The program must be stopped later if it tries to use
or case-analyze the variable (in particular, branching on a
partially-leaked Boolean variable is stopped). Permissive-upgrade also
ensures termination-insensitive noninterference, but is strictly more
permissive than NSU. For example, permissive-upgrade stops the leaky
program of Listing~\ref{lst1} at line~\ref{linerefcond} when $z$
contains $\texttt{false}^H$, but it allows the program of
Listing~\ref{lst1.1} to execute to completion when $y$ contains
$\texttt{true}^L$.

\paragraph{Contribution of this paper}
Although permissive-upgrade is useful, its development in literature
is incomplete so far: Austin and Flanagan's original
paper~\cite{plas10}, and work building on it, develops
permissive-upgrade for \emph{only} a two-point security lattice,
containing levels $L$ and $H$ with $L \sqsubset H$, and the new label
$P$. A generalization to a pointwise product of such two-point
lattices (and, hence, a powerset lattice) was suggested by Austin and
Flanagan in the original paper, but not fully developed. As we explain
in Section~\ref{sec:existing}, this generalization works and can be
proved sound. However, that still leaves open the question of
generalizing permissive-upgrade to arbitrary lattices. It is not even
clear hitherto that this generalization exists.

% However, the generalization requires enumeration of the
% two-point lattices in the product, which may not be feasible. For
% instance, if labels are policies written in a declarative policy
% language and represent predicates over the access context, then an
% enumeration of the two-point lattices in the product correspond to
% enumeration of access contexts, which may be an infinite set (the
% points of the powerset lattice are sets of such contexts).

% is not clear how this generalization can be used for an arbitrary
% lattice. Hence, there is merit in generalizing permissive-upgrade to
% arbitrary lattices, without the product construction.

In Section~\ref{sec:gen:pus}, we show by construction that a
generalization of permissive-upgrade to arbitrary lattices does indeed
exist and that it is, in fact, non-obvious. Specifically, the rule for
adding partially-leaked labels and the definition of store (memory)
equivalence needed to prove noninterference are reasonably
involved. On powerset lattices, the resulting IFC monitor is different
from the result of the product construction, and we show that our
system can be more permissive than the product construction in some
cases. By developing this generalization, our work makes
permissive-upgrade applicable to arbitrary security lattices like
other IFC techniques and, hence, constitutes a useful contribution to
IFC literature.

\section{Language and Basic IFC Semantics}

%%%%%%%%%%%%%%%%%%%%%%%%%%%%%%%%%%%%
\begin{figure}
\begin{align*}
\expr	:=~& n~\arrowvert~x~\arrowvert~\expr_1 \odot \expr_2 \\
\comm	:=~& % \sk~\arrowvert 
 % ~
 x := \expr~\arrowvert 
 ~\comm_1;\comm_2~\arrowvert \\
& \texttt{if}~\expr~\texttt{then}~\comm_1~\texttt{else}~\comm_2~\arrowvert \\
& \texttt{while}~\expr~\texttt{do}~\comm_1 \\ \\
\lab          :=~& L~\arrowvert~H\\
\pc          :=~& \lab \\
k,l,m :=~& \lab
\end{align*}
\caption{Syntax}\label{basic:syntax}
\end{figure}

%%%%%%%%%%%%%%%%%%%%%%%%%%%%%%%%%%%%

%%%%%%%%%%%%%%%%%%%%%%%%%%%%%%%%%%%%
\begin{figure*}
%\onecolumn
%%%    
Expressions:\\
\begin{align*}
\inference[const: ]
{}
{\langle n, \sigma \rangle \Downarrow n^{\perp}}
\qquad
%\end{align*}
%%%    
%\begin{align*}
\inference[var: ]
{n^k := \sigma(x)}
{\langle x, \sigma \rangle \Downarrow n^k}
\end{align*}
%%%    
\begin{align*}
\inference[oper: ]
{\expr = \expr' \odot \expr'' \qquad \langle \expr', \sigma \rangle \Downarrow
  n'^{k'} \qquad \langle \expr'', \sigma \rangle \Downarrow n''^{k''} \qquad n := n'
  \odot n'' \qquad k := k' \sqcup k''}
{\langle \expr, \sigma \rangle \Downarrow n^k}
\end{align*}
%%%    
Statements:\\
\begin{align*}
% \inference[skip: ]
% {}
% {\langle \sk, \sigma \rangle \Downarrow_\pc \sigma}
%\end{align*}
%%%    
% \begin{align*}
% \inference[assn: ]
% {l := \Gamma(\sigma(x)) \qquad  \langle \expr,
%  \sigma \rangle \Downarrow n^m \qquad \mbox{Some condition} \qquad \pc
%   \sqsubseteq \lab_x \qquad k := \pc \sqcup m}
% {\langle x := \expr, \sigma \rangle \Downarrow_\pc \sigma[x \mapsto
%   n^{k}]}
% \end{align*}
%%%    
%\qquad
%\begin{align*}
\inference[seq: ]
{\langle \comm_1, \sigma \rangle \Downarrow_\pc \sigma'' \qquad \langle \comm_2, \sigma'' \rangle \Downarrow_\pc \sigma' }
{\langle \comm_1;\comm_2, \sigma \rangle \Downarrow_\pc \sigma'}
%\end{align*}
%%%    
%\begin{align*}
\qquad
\inference[if-else-t: ]
{\langle \expr, \sigma \rangle \Downarrow n^{\lab} \quad n = \texttt{true} \quad \langle
  \comm_1 , \sigma \rangle \Downarrow_{\pc\, \sqcup\, \lab} \sigma'}
{\langle \texttt{if}~\expr~\texttt{then}~\comm_1~\texttt{else}~\comm_2, \sigma \rangle \Downarrow_\pc \sigma'}
\end{align*}
%%%    
\begin{align*}
\inference[if-else-f: ]
{\langle \expr, \sigma \rangle \Downarrow n^{\lab} \quad n = \texttt{false} \quad \langle
  \comm_2 , \sigma \rangle \Downarrow_{\pc\, \sqcup\, \lab} \sigma'}
{\langle \texttt{if}~\expr~\texttt{then}~\comm_1~\texttt{else}~\comm_2, \sigma \rangle \Downarrow_\pc \sigma'}
%\end{align*}
%%%    
%\begin{align*}
\qquad
\inference[while-f: ]
{\langle \expr, \sigma \rangle \Downarrow_\pc n^{\lab} \qquad n = \texttt{false} }
{\langle \texttt{while}~\expr~\texttt{do}~\comm_1, \sigma \rangle \Downarrow_\pc \sigma }
\end{align*}
%%%
\begin{align*}
\inference[while-t: ]
{\langle \expr, \sigma \rangle \Downarrow n^{\lab} \qquad n = \texttt{true} \qquad \langle
  \comm_1, \sigma \rangle \Downarrow_{\pc\, \sqcup\, \lab} \sigma'' \qquad \langle
  \texttt{while}~\expr~\texttt{do}~\comm_1, \sigma'' \rangle
  \Downarrow_{\pc\, \sqcup\, \lab}  \sigma' }
{\langle \texttt{while}~\expr~\texttt{do}~\comm_1, \sigma \rangle \Downarrow_\pc \sigma'}
\end{align*}
%%%    
%%%
% \begin{align*}
% \inference[assn-NSU: ]
% {l := \Gamma(\sigma(x)) \qquad  \langle \expr,
%  \sigma \rangle \Downarrow n^m \qquad \pc
%   \sqsubseteq \lab_x \qquad k := \pc \sqcup m}
% {\langle x := \expr, \sigma \rangle \Downarrow_\pc \sigma[x \mapsto
%   n^{k}]}
% \end{align*}
%\twocolumn
\caption{Semantics}\label{fig:basic-semantics}
\end{figure*}

%\subsection{Syntax}

Our technical development is based on a simple imperative language
shown in Figure~\ref{basic:syntax}.\footnote{Austin and Flanagan's
  work on permissive-upgrade is based on a $\lambda$-calculus with
  dynamic allocation, which is more general than this
  language~\cite{plas10}. However, our key ideas are orthogonal to the
  choice of language and generalize to the language of~\cite{plas10}
  easily. We use a simpler language to simplify non-essential
  technical details.}  The language's expressions include constants or
values ($n$), variables ($x$) and unspecified operators ($\odot$) to
combine them. The set of variables is fixed upfront. Labels ($\lab$)
are drawn from a fixed security lattice. For now, the lattice contains
only two labels $\{L, H\}$ with the ordering $L \sqsubset H$; we
generalize this later in the paper. Join ($\sqcup$) and meet
($\sqcap$) operations are defined as usual on the lattice. The program
counter label $\pc$ is an element of the lattice.

%%%%%%%%%%%%%%%%%%%%%%%%%%%%%%%%%%%%
\subsection{IFC Semantics and NSU}
The rules in Figure~\ref{fig:basic-semantics} define the big-step
semantics of the language, including standard taint propagation for
IFC: the evaluation relation $\langle \expr, \sigma \rangle \Downarrow
n^k$ for expressions, and the evaluation relation $\langle \comm,
\sigma \rangle \Downarrow_{\pc} \sigma'$ for commands. Here, $\sigma$
denotes a store, a map from variables to labeled values of the form
$n^k$. For now, labels $k ::= \lab$; we generalize this later when we
introduce partially-leaked taints.

The evaluation relation for expressions evaluates an expression
$\expr$ and returns its value $n$ and label $k$. The label $k$ is the
join of labels of all variables occurring in $\expr$ (according to
$\sigma$). The relation for commands executes a command $\comm$ in the
context of a store $\sigma$, and the current program counter label
$\pc$, and yields a new store $\sigma'$.  The function
$\Gamma(\sigma(x))$ returns the label associated with the value in $x$
in store $\sigma$: If $\sigma(x) = n^k$, then $\Gamma(\sigma(x)) =
k$. We write $\perp$ for the least element of the lattice. Here, $\bot
= L$.

We explain the rules for evaluating commands. The rule for sequencing
$\comm_1; \comm_2$ evaluates the command $c_1$ under store $\sigma$
and the current $\pc$ label; this yields a new store $\sigma''$. It
then evaluates the command $c_2$ under store $\sigma''$ and the same
$\pc$ label, which yields the final store $\sigma'$.

The rules for \texttt{if-else} evaluate the branch condition $e$ to a
value $n$ with label $\lab$. Based on the value of $n$, one of the
branches $\comm_1$ and $\comm_2$ is executed under a $\pc$ obtained by
joining the current $\pc$ and the label $\lab$ of $n$. Similarly, the
rules for \texttt{while} evaluate the loop condition $e$ and execute
the loop command $c_1$ while $e$ evaluates to \texttt{true}. The $\pc$
for the loop is obtained by joining the current $\pc$ and the label
$\lab$ of the result of evaluating $e$.

Rules for assignment statements are conspicuously missing from
Figure~\ref{fig:basic-semantics} because they depend on the strategy
used to control implicit flows. In the remainder of this paper we
consider a number of such rules. To start, the rule for assignment
corresponding to the NSU check is shown in
Figure~\ref{fig:assn-nsu}. The rule checks that the label $l$ of the
assigned variable $x$ in the initial store $\sigma$ is at least as
high as $\pc$ (premise $\pc \sqsubseteq l$). If this condition is not
true, the program gets stuck. This is exactly the NSU check described
in Section~\ref{sec:intro}.

% Figure~\ref{fig:basic-syntax} and \ref{fig:basic-semantics}
% describes the syntax and semantics  of a standard while language
% with information flow  control. Notice the assignment rule often
% imposes some condition  on the $pc$ to prohibit low assignments
% under high context. For  instance, for the no-sensitive upgrade
% check as described earlier  the assignment rule.

\begin{figure}
  \inference[assn-NSU: ] {l := \Gamma(\sigma(x)) \qquad \pc
    \sqsubseteq l \qquad \langle \expr, \sigma \rangle \Downarrow n^m
    } {\langle x := \expr, \sigma \rangle
    \Downarrow_\pc \sigma[x \mapsto n^{(\pc \sqcup m)}]}
% \quad
% \inference[assn-NSU-orig: ]
% {\sigma, \theta, e_1 \Downarrow_\pc \sigma_1, a^k \qquad \sigma_1,
%   \theta, e _2\Downarrow_\pc \sigma_2, v \\ k \sqsubseteq \lbl(\sigma_2(a))}
% {\sigma, \theta, (e_1 := e_2) \Downarrow_\pc \sigma_2[a:=(v \sqcup
%   k)], v}
\caption{Assignment rule for NSU}
\label{fig:assn-nsu}
\end{figure}

\subsection{Termination-Insensitive Noninterference}

The end-to-end security property usually established for dynamic IFC
is termination-insensitive noninterference (TINI). Noninterference
means (in a technical sense, formalized below) that two runs of the
same program starting from any two stores that are observationally
equivalent for any adversary end with two stores that are also
observationally equivalent for that adversary. For our observation
model, where the adversary sees only initial and final memories,
termination-insensitive means that we are willing to tolerate the
one-bit leak when an adversary checks whether or not the program
terminated (for programs with intermediate observable outputs,
termination-insensitivity may leak more than one
bit~\cite{askarov08:tini}). In particular, this discounted one-bit
leak accounts for termination due to failure of the NSU or
permissive-upgrade check. Technically, termination-insensitivity
amounts to considering only properly terminating runs in the
noninterference theorem.

Store equivalence is formalized as a relation $\sim_\lab$,
indexed by lattice elements $\lab$, representing the adversary.

\begin{mydef}
  Two labeled values $n_1^k$ and $n_2^m$ are $\lab$-equivalent,
  written $n_1^k \sim_\lab n_2^m$, iff either:
  \begin{enumerate}
  \item $(k = m) \sqsubseteq \lab$ and $n_1 = n_2$ or
  \item $k \not\sqsubseteq \lab$ and $m \not\sqsubseteq \lab$
  \end{enumerate}
\end{mydef}

This definition states that for an adversary at security level $\lab$,
two labeled values $n_1^k$ and $n_2^m$ are equivalent iff either
$\lab$ can access both values and $n_1$ and $n_2$ are equal, or it
cannot access either value ($k \not\sqsubseteq \lab$ and $m
\not\sqsubseteq \lab$). The additional constraint $k = m$ in clause
(1) is needed to prove noninterference by induction. Note that two
values labeled $L$ and $H$ respectively are distinguishable for the
$L$-adversary.

\begin{mydef}
  Two stores $\sigma_1$ and $\sigma_2$ are $\lab$-equivalent,
  written $\sigma_1 \sim_\lab \sigma_2$, iff for every variable $x$,
  $\sigma_1(x) \sim_\lab \sigma_2(x)$.
\end{mydef}

The following theorem states TINI for the NSU check. The theorem has
been proved for various languages in the past; we present it here for
completeness.

\begin{myThm}[TINI for NSU]
  With the assignment rule from Figure~\ref{fig:assn-nsu}, if
  $~\sigma_1 \sim_\lab \sigma_2$ and $\langle c, \sigma_1 \rangle
  \Downarrow_\pc \sigma_1' $ and $\langle c, \sigma_2
  \rangle\Downarrow_\pc \sigma_2' $, then $\sigma_1' \sim_\lab
  \sigma_2'$.
\end{myThm}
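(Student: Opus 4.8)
The plan is to prove the statement by induction on the derivation of the first run $\langle c, \sigma_1 \rangle \Downarrow_\pc \sigma_1'$, carrying the second derivation and the hypothesis $\sigma_1 \sim_\lab \sigma_2$ as universally quantified side data in the induction hypothesis. Before the main induction I would establish two auxiliary facts. First, an \emph{expression lemma}: if $\sigma_1 \sim_\lab \sigma_2$ and $\langle \expr, \sigma_1 \rangle \Downarrow n_1^{k_1}$ and $\langle \expr, \sigma_2 \rangle \Downarrow n_2^{k_2}$, then $n_1^{k_1} \sim_\lab n_2^{k_2}$; this follows by a routine induction on $\expr$, using that a constant carries label $\perp \sqsubseteq \lab$, that the variable case is immediate from store equivalence, and that a join of labels stays outside $\lab$ as soon as one operand does. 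Second -- and this is the crucial ingredient -- a \emph{confinement lemma}: if $\langle c, \sigma \rangle \Downarrow_\pc \sigma'$ and $\pc \not\sqsubseteq \lab$, then $\sigma \sim_\lab \sigma'$. I would also record once that $\sim_\lab$ is reflexive, symmetric, and transitive; transitivity holds because a pair of values cannot simultaneously match clause (1) and clause (2).

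The confinement lemma is where the NSU check earns its keep, and I would prove it by induction on the derivation. The only rule that alters the store is assn-NSU, and there the premise $\pc \sqsubseteq l$ (with $l = \Gamma(\sigma(x))$) forces $l \not\sqsubseteq \lab$, since otherwise $\pc \sqsubseteq l \sqsubseteq \lab$ would contradict $\pc \not\sqsubseteq \lab$; hence the \emph{old} value of $x$ is invisible to the adversary. The new label $\pc \sqcup m \sqsupseteq \pc$ is likewise $\not\sqsubseteq \lab$, so the \emph{new} value is invisible too, and clause (2) of value equivalence gives $\sigma(x) \sim_\lab \sigma'(x)$; every other variable is untouched. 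In the branching and loop rules the guard label only raises the context to $\pc \sqcup \lab'$, which remains $\not\sqsubseteq \lab$, so the induction hypothesis applies to the sub-derivations, and sequencing chains two such applications.

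With these in hand the main induction is largely mechanical. For assignment the expression lemma gives $n_1^{m} \sim_\lab n_2^{m'}$, and joining both sides with the common $\pc$ preserves equivalence in each clause, so $\sigma_1' \sim_\lab \sigma_2'$. Sequencing chains two applications of the induction hypothesis. The interesting cases are \texttt{if-else} and \texttt{while}: I evaluate the guard on both sides and split on which clause of value equivalence relates the results. Under clause (1) both runs agree on the guard value \emph{and} its label, hence take the same branch (or the same loop entry) under the same raised $\pc$, so the induction hypothesis applies directly; for \texttt{while} the recursive unfolding is itself a sub-derivation and so is covered. Under clause (2) the two runs may diverge, but then each executes its branch under $\pc \sqcup \lab' \not\sqsubseteq \lab$, so confinement yields $\sigma_1 \sim_\lab \sigma_1'$ and $\sigma_2 \sim_\lab \sigma_2'$; combining with $\sigma_1 \sim_\lab \sigma_2$ by symmetry and transitivity delivers $\sigma_1' \sim_\lab \sigma_2'$.

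The main obstacle is exactly this high-guard case: big-step semantics blocks a naive simultaneous induction on two structurally different derivations, and the confinement lemma is precisely what lets me reason about each run in isolation and then glue the results with transitivity. A secondary subtlety worth flagging is the label-equality requirement $k = m$ in clause (1) of value equivalence; it is what makes $\sim_\lab$ transitive and what guarantees that two low runs re-derive \emph{matching} guard labels (and hence the same raised $\pc$) in the \texttt{while} case, so it cannot be dropped without breaking the induction.
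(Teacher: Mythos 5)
Your proposal is correct and follows exactly the standard route that the paper invokes by citation (``Standard, see e.g.,~\cite{plas09}''): an expression-evaluation lemma, a confinement lemma exploiting the NSU premise $\pc \sqsubseteq l$, and a derivation induction that handles diverging high branches by gluing confinement results with symmetry and transitivity of $\sim_\lab$, which is legitimate here (unlike in the permissive-upgrade setting) precisely because of the $k = m$ requirement you flag. This is also the same architecture the paper itself deploys for its generalized theorem in the appendix, so there is nothing to correct.
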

\begin{proof} Standard, see e.g.,~\cite{plas09}
\end{proof}

Although we have restricted our security lattice to two elements $L$
and $H$, the rules of Figures~\ref{fig:basic-semantics}
and~\ref{fig:assn-nsu}, the definition of equivalence above and the
theorem above (for NSU) are all general and work for arbitrary
lattices.
  
%%%%%%%%%%%%%%%%%%%%%%%%%%%%%%%%%%%%%%%%%%%%%%%%%%%%%%%%%%%%%

%%%%%%%%%%%%%%%%%%%%%%%%%%%%%%%%%%%%%%%%%%%%%%%%%%%%%%%%%%%%%

\section{Permissive-Upgrade on a Two-Point Lattice}
\label{sec:existing}

% The assignment rule in the case of the no-sensitive-upgrade
% approach is shown in Fig.~\ref{fig:assn-nsu}. The rule on the left
% corresponds to the actual rule~\cite{plas09} on the right written for our
% language. It should be noted that $k$ in the rule
% \texttt{assn-NSU-orig} is obtained by joining $\pc$ with the label of
% the value in reference cell $a$, i.e., the condition is $\pc \sqcup
% \lbl(\sigma_2(a)) \sqsubseteq \lbl(\sigma_2(a))$, which reduces to the
% condition $\pc \sqsubseteq l$ in our formalism. 

As described in Section~\ref{sec:intro}, the NSU check is restrictive
and halts many programs that do not leak information. To improve
permissiveness, the permissive-upgrade strategy was proposed as a
replacement for NSU by Austin and Flanagan~\cite{plas10}. However,
that development is limited to a two-point lattice $L \sqsubset H$ and
to pointwise products of such lattices. We present the key results
of~\cite{plas10} here (using modified notation and for our language)
and then build a generalization of permissive-upgrade to arbitrary
lattices in the next section. Readers should keep in mind that in this
section, the lattice has only two levels: $L$ (public) and $H$
(confidential).

We introduce a new label $P$ for ``partially-leaked''. We allow labels
$k,l,m$ on values to be either elements of the lattice ($L, H$) or
$P$. The $\pc$ can only be one of $L, H$ because branching on
partially-leaked values is prohibited. This is summarized by the
revised syntax of labels in Figure~\ref{pus:syntax}. The figure also
lifts the join operation $\sqcup$ to labels including $P$. Note that
joining any label with $P$ results in $P$. For brevity in definitions,
we also extend the order $\sqsubset$ to $L \sqsubset H \sqsubset
P$. However, $P$ is not a new ``top'' member of the lattice because it
receives special treatment in the semantic rules.

The intuition behind the partial-leak label $P$ is the following:

\begin{framed}
\noindent
  A variable with a value labeled $P$ may have been implicitly
  influenced by $H$-labeled values in this execution, but in other
  executions (obtainable by changing $H$-labeled values in the
  initial store), this implicit influence may not exist and, hence,
  the variable may be labeled $L$.
\end{framed}

\begin{figure}
\begin{align*}
\lab          :=~& L~\arrowvert~H\\
\pc          :=~& \lab\\
k,l,m :=~& \lab~\arrowvert~P\\ \\
k \sqcup k ~= ~& k\\
L \sqcup H ~=~ &H\\
L \sqcup P  ~=~&P \\
H\sqcup P  ~=~&P
\end{align*}
\caption{Syntax of labels including the partially-leaked label
  $P$}\label{pus:syntax}
\end{figure}

The rule for assignment with permissive-upgrade is
\begin{align*}
  \inference[assn-PUS: ] {l := \Gamma(\sigma(x)) \qquad \langle
    \expr, \sigma \rangle \Downarrow n^m} {\langle x := \expr, \sigma
    \rangle \Downarrow_\pc \sigma[x \mapsto n^{k}]}
\end{align*}
where $k$ is defined as follows:
\[ k = \left\lbrace
\begin{array}{l}
m \mbox{ if } \pc = L \\
m \sqcup H  \mbox{ if } \pc = H \mbox{ and } l = H\\
P \mbox{ otherwise}
\end{array} \right.
\]
The first two conditions in the definition of $k$ correspond to the
NSU rule (Figure~\ref{fig:assn-nsu}). The third condition applies, in
particular, when we assign to a variable whose initial label is $L$
with $\pc = H$. The NSU check would stop this assignment. With
permissive-upgrade, however, we can give the updated variable the
label $P$, consistent with the intuitive meaning of $P$. This allows
more permissiveness by allowing the assignment to proceed in all
cases. To compensate, we disallow any program (in particular, an
adversarial program) from case analyzing any value labeled
$P$. Consequently, in the rules for \texttt{if-then} and
\texttt{while} (Figure~\ref{fig:basic-semantics}), we require that the
label of the branch condition be of form $\lab$, which does not
include $P$.

The noninterference result obtained for NSU earlier can be extended to
permissive-upgrade by changing the definition of store 
equivalence. Because no program can case-analyze a $P$-labeled value,
such a value is equivalent to any other labeled value.

\begin{mydef}
  Two labeled values $n_1^k$ and $n_2^m$ are equivalent,
  written $n_1^k \sim n_2^m$, iff either:
  \begin{enumerate}
  \item $(k = m) = L$ and $n_1 = n_2$ or
  \item $k = m = H$ or
  \item $k = P$ or $m = P$
  \end{enumerate}
\end{mydef}

\begin{myThm}[TINI for permissive-upgrade with a two-point lattice]
  With the assignment rule \emph{assn-PUS}, if
  $~\sigma_1 \sim \sigma_2$ and $\langle c, \sigma_1 \rangle
  \Downarrow_\pc \sigma_1' $ and $\langle c, \sigma_2
  \rangle\Downarrow_\pc \sigma_2' $, then $\sigma_1' \sim
  \sigma_2'$.
\end{myThm}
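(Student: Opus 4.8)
The plan is to prove the statement by induction on the big-step derivation of one run, supported by two auxiliary lemmas that isolate the two qualitatively different situations: branches whose guard is observable (label in $\{L,H\}$, forced equal) and branches whose guard is confidential (label $H$). First I would record two easy facts. Reflexivity: for any $n^k$ we have $n^k \sim n^k$ (clause 1, 2, or 3 according to $k$). And an \emph{expression lemma}: if $\sigma_1 \sim \sigma_2$ and $\langle \expr, \sigma_i\rangle \Downarrow n_i^{k_i}$, then $n_1^{k_1} \sim n_2^{k_2}$. This I would prove by induction on $\expr$; the only interesting case is $\expr' \odot \expr''$, where if any operand label is $P$ the joined label is $P$ and clause 3 applies, and otherwise the induction hypothesis forces each operand pair to be either equal-and-$L$ or jointly-$H$, so the joined labels coincide and the joined values agree whenever both are $L$.

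The heart of the argument is the treatment of confidential branches, and here the main obstacle is that $\sim$ is \emph{not} transitive (e.g.\ $n^H \sim n'^P$ and $n'^P \sim n''^L$ but $n^H \not\sim n''^L$). Consequently the tempting chain $\sigma_1' \sim \sigma_1 \sim \sigma_2 \sim \sigma_2'$ is invalid, and I cannot reduce the two-run claim to two single-run confinement statements glued by transitivity. Instead I would prove a single-run \emph{high-$\pc$ confinement} statement carrying a precise transition invariant: if $\langle \comm, \sigma\rangle \Downarrow_H \sigma'$, then for every $x$, a variable labeled $L$ is either left untouched or relabeled $P$, a variable labeled $H$ is relabeled within $\{H,P\}$, and a variable labeled $P$ stays $P$. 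This invariant is exactly what is needed to be closed under sequencing (a routine split on the two steps), and it holds for \emph{assn-PUS} because under $\pc = H$ the assigned label is $m \sqcup H$ (hence $H$ or $P$) when $l = H$ and $P$ otherwise, never $L$; the \texttt{if}/\texttt{while} cases follow since the chosen sub-command runs under $H \sqcup \lab = H$.

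Next I would state a purely store-level \emph{gluing lemma}: if $\sigma_1 \sim \sigma_2$ and each transition $\sigma_i \mapsto \sigma_i'$ satisfies the confinement invariant above, then $\sigma_1' \sim \sigma_2'$. This is proved by a per-variable case analysis that replaces transitivity by using the invariant on both runs simultaneously: if the new label on one side is $P$, clause 3 closes the case; if it is $L$, confinement forces that side to be untouched, and $\sigma_1 \sim \sigma_2$ together with the fact that $P$ can only persist as $P$ rules out the problematic mismatches, forcing the other side to also be an untouched $L$ with equal value; the $H$/$H$ case is clause 2. This combined case analysis, rather than any single step, is the real content of the proof.

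Finally I would carry out the main induction on the height of the derivation of the first run. For $x := \expr$ under $\pc = L$ the assigned labels equal the expression labels, so the expression lemma closes the case; under $\pc = H$ both assigned labels lie in $\{H,P\}$, which are pairwise $\sim$-equivalent by clauses 2 and 3. Sequencing applies the induction hypothesis twice (no transitivity needed). For \texttt{if}/\texttt{while}, the expression lemma gives $n_1^{\lab_1}\sim n_2^{\lab_2}$, and since a terminating run must use a rule whose guard label is in $\{L,H\}$ (branching on $P$ is stuck), both labels are non-$P$; clauses 1 and 2 then force $\lab_1 = \lab_2$, so no mixed $L$/$H$ case arises. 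If both are $L$ the guard values agree, both runs take the same branch under the unchanged $\pc$, and the induction hypothesis applies; if both are $H$, each run executes (possibly different) code under effective $\pc = H$, so the confinement invariant holds for each transition and the gluing lemma delivers $\sigma_1' \sim \sigma_2'$.
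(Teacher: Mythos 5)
Your proposal is correct, and it is worth noting how it relates to what the paper actually does: the paper gives no proof of this theorem at all, but defers to Austin and Flanagan~\cite{plas10}, whose proof copes with the failure of transitivity of $\sim$ by introducing an auxiliary \emph{evolution} relation between the values of a single run. You take a genuinely different route. You correctly identify non-transitivity as the central obstacle (your counterexample $n^H \sim n'^P$, $n'^P \sim n''^L$, $n^H \not\sim n''^L$ is exactly right), and you replace evolution by a per-variable transition invariant for high-$\pc$ runs ($L$ is untouched or becomes $P$; $H$ stays within $\{H,P\}$; $P$ stays $P$) plus a store-level gluing lemma that applies the invariant to both runs simultaneously to rule out the problematic $L$/$H$ mismatches. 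Both pieces check out: the invariant is preserved by assn-PUS and composes under sequencing and nested high branches, and the gluing case analysis is exhaustive (a final pure $L$ forces an untouched $L$ on both sides). The main induction is assembled soundly, including the observation that equivalent non-$\star$ guards must carry equal labels, so the only case needing confinement is the jointly-$H$ guard. Interestingly, your route is essentially the two-point specialization of the machinery this paper develops for the arbitrary-lattice generalization: your clause ``$P$ stays $P$'' is the $\star$-preservation lemma, your clause ``a variable ending with a pure low label was untouched'' is the $\pc$-lemma and its corollary, and your gluing lemma corresponds to the per-variable case analysis inside the if-else and while cases of the generalized noninterference proof. What your approach buys over the cited evolution proof is self-containedness and a structure that scales beyond the two-point lattice --- which is precisely the direction the rest of the paper takes; what the evolution relation buys is a single inductively maintained relation in place of a conjunction of label-transition facts.
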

\begin{proof} See~\cite{plas10}.
\end{proof}

Note that the above definition and proof are specific to the two-point
lattice.

%%%%% This figure actually belongs to our approach %%%%%%
%%%%% Moved here for layout purposes %%%%%%
\begin{figure*}
\begin{align*}
\inference[assn-1: ]
{l := \Gamma(\sigma(x)) \qquad  \langle \expr,
 \sigma \rangle \Downarrow n^m \qquad l = \lab_x \vee l = \lab_x\pl \qquad \pc
  \sqsubseteq \lab_x \qquad k := \pc \sqcup m}
{\langle x := \expr, \sigma \rangle \Downarrow_\pc \sigma[x \mapsto
  n^{k}]}
\end{align*}
%%%    
\begin{align*}
\inference[assn-2: ]
{l := \Gamma(\sigma(x)) \qquad  \langle \expr,
  \sigma \rangle \Downarrow n^m \qquad l = \lab_x \vee  l = \lab_x\pl \qquad \pc
  \not\sqsubseteq \lab_x \qquad k := ((\pc \sqcup m)\sqcap \lab_x )\pl}
{\langle x := \expr, \sigma \rangle \Downarrow_\pc \sigma[x \mapsto
  n^{k}]}
\end{align*}
\caption[Caption]{Assignment rules for the generalized
  permissive-upgrade\footnotemark}\label{fig:assn-our}
\end{figure*}

\paragraph{Generalization from~\cite{plas10}}
Austin and Flanagan point out that permissive-upgrade on a two-point
lattice, as described above, can be generalized to a pointwise product
of such lattices. Specifically, let $X$ be an index set --- these
indices are called principals in~\cite{plas10}. Let a label $l$ be a
map of type $X \rightarrow \{L, H, P\}$ and let the subclass of pure
labels contain maps $\lab, \pc$ of type $X \rightarrow \{L, H\}$. The
order $\sqsubset$ and the join operation $\sqcup$ can be generalized
pointwise to these labels. Finally, the rule assn-PUS can be
generalized pointwise by replacing it with the following rule:
\begin{align*}
  \inference[assn-PUS': ] {l := \Gamma(\sigma(x)) \qquad \langle
    \expr, \sigma \rangle \Downarrow n^m} {\langle x := \expr, \sigma
    \rangle \Downarrow_\pc \sigma[x \mapsto n^{k}]}
\end{align*}
where $k$ is defined as follows:
\[ k(a) = \left\lbrace
\begin{array}{l}
m(a) \mbox{ if } \pc(a) = L \\
m(a) \sqcup H  \mbox{ if } \pc(a) = H \mbox{ and } l(a) = H\\
P \mbox{ otherwise}
\end{array} \right.
\]
% and $a$ is an element in the index set $X$.
% (This may look complex but is really just the pointwise generalization
% of the rules of Figure~\ref{fig:assn-pus}. The three clauses in the
% definition of $k(a)$ correspond to the three rules of
% Figure~\ref{fig:assn-pus}.) 
It can be shown that for any semantic
derivation in this generalized system, projecting all labels to a
given principal yields a valid semantic derivation in the system with
a two-point lattice. This immediately implies noninterference for the
generalized system, where observations are limited to individual
principals.

\begin{mydef}
\label{def:eq-existing}
  Two labeled values $n_1^k$ and $n_2^m$ are $a$-equivalent,
  written $n_1^k \approx^a n_2^m$, iff either:
  \begin{enumerate}
  \item $k(a) = m(a) = L$ and $n_1 = n_2$ or
  \item $k(a) = m(a) = H$ or
  \item $k(a) = P$ or $m(a) = P$
  \end{enumerate}
\end{mydef}

\begin{myThm}[TINI for permissive-upgrade with a product lattice]
  With the assignment rule assn-PUS', if $~\sigma_1
  \approx^a \sigma_2$ and $\langle c, \sigma_1 \rangle \Downarrow_\pc
  \sigma_1' $ and $\langle c, \sigma_2 \rangle\Downarrow_\pc \sigma_2'
  $, then $\sigma_1' \approx^a \sigma_2'$.
\end{myThm}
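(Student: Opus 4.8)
The plan is to prove this theorem not by a direct induction on the two derivations, but by \emph{reducing} it to the already-established two-point permissive-upgrade theorem through a projection argument, exactly as the preceding remark anticipates. Fix the adversary principal $a \in X$. Define a projection operator $\pi_a$ that sends a product label $l : X \to \{L,H,P\}$ to the single (two-point-with-$P$) label $l(a)$, and lift it to stores by setting $\pi_a(\sigma)(x) = n^{l(a)}$ whenever $\sigma(x) = n^l$. The whole argument rests on showing that $\pi_a$ carries every object of the product system (labels, stores, and derivations) to a legal object of the two-point system.

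First I would check that $\pi_a$ is a homomorphism for the operations appearing in the semantics. Since $\sqcup$ on product labels is defined pointwise, $\pi_a(l_1 \sqcup l_2) = \pi_a(l_1) \sqcup \pi_a(l_2)$, and the least element projects correctly, $\pi_a(\perp) = L$. Consequently $\Gamma$ commutes with projection: $\Gamma(\pi_a(\sigma)(x)) = \pi_a(\Gamma(\sigma(x)))$. These facts make the expression rules (const, var, oper) project rule-for-rule.

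Next comes the central lemma: if $\langle \comm, \sigma \rangle \Downarrow_\pc \sigma'$ is derivable in the product system, then $\langle \comm, \pi_a(\sigma) \rangle \Downarrow_{\pc(a)} \pi_a(\sigma')$ is derivable in the two-point system. Throughout, the program counter stays pure (it is only ever joined with pure guard labels), so $\pc(a) \in \{L,H\}$ is a legal two-point program counter. I would prove the lemma by induction on the product-system derivation, treating each rule in turn. The interesting case is assignment: reading assn-PUS' at principal $a$ gives $k(a) = m(a)$ when $\pc(a) = L$, $k(a) = m(a) \sqcup H$ when $\pc(a)=H$ and $l(a)=H$, and $k(a) = P$ otherwise --- precisely the two-point definition of $k$ in assn-PUS. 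The branching rules (if-else, while) require the guard's label to be pure; a pure product label maps every principal into $\{L,H\}$, so its projection $\lab(a) \in \{L,H\}$ is again pure and the projected branch is legal, while $\pi_a(\pc \sqcup \lab) = \pc(a) \sqcup \lab(a)$ keeps the program-counter bookkeeping intact. Sequencing and the loop cases follow directly from the induction hypotheses.

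Finally I would observe that $\approx^a$ on product stores is literally $\sim$ on their projections: comparing the clauses of the two definitions, $\sigma_1 \approx^a \sigma_2$ holds iff $\pi_a(\sigma_1) \sim \pi_a(\sigma_2)$. Combining this with the central lemma, the hypothesis $\sigma_1 \approx^a \sigma_2$ together with the two product derivations yields two two-point derivations starting from $\pi_a(\sigma_1) \sim \pi_a(\sigma_2)$; applying the two-point permissive-upgrade theorem gives $\pi_a(\sigma_1') \sim \pi_a(\sigma_2')$, that is, $\sigma_1' \approx^a \sigma_2'$. I expect the main obstacle to be purely the bookkeeping inside the central lemma --- in particular, making the guard-purity observation airtight so that no projected derivation can branch on a $P$-valued guard --- rather than any genuine mathematical difficulty, since the pointwise definition of assn-PUS' was engineered precisely so that this projection goes through.
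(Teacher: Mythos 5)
Your proposal is correct and is precisely the argument the paper itself gives (and leaves as an outline): projecting every label to the fixed principal $a$ turns any product-system derivation into a valid two-point derivation, and since $\approx^a$ coincides with $\sim$ on the projected stores, the two-point TINI theorem immediately yields the result. Your write-up merely makes explicit the homomorphism and guard-purity bookkeeping that the paper's one-paragraph outline takes for granted.
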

\begin{proof} Outlined above.
\end{proof}

\paragraph{Remark} This generalization also makes sense if the 
principals are pre-ordered by a relation, say, $\leq$, with $a \leq b$
meaning that ``if $a$ has access, then $b$ must have access''. It can
be proved that the following is an \emph{invariant} on all labels $l$
that arise during program execution: $((a \leq b) \mathrel{\wedge}
(l(a) = L)) \Rightarrow l(b) = L$. Hence, the intuitive meaning of the
order $\leq$ is preserved during execution.

This generalization of the two-point lattice to an arbitrary product
of such lattices is interesting because an arbitrary powerset lattice
can be simulated using such a product. However, this still leaves open
the question of constructing a generalization of permissive-upgrade to
an arbitrary lattice. We develop such a generalization in the next
section.

\section{Permissive-Upgrade on Arbitrary Lattices}
\label{sec:gen:pus}

The generalization of permissive-upgrade described in this section
applies to an arbitrary security lattice. For every element $\lab$ of
the lattice, we introduce a new label $\lab\pl$ which means
``partially-leaked $\lab$'', with the following intuition.
\begin{framed}
\noindent
A variable labeled $\lab\pl$ may contain partially-leaked data, where
$\lab$ is a \emph{lower-bound} on the $\star$-free labels the variable
may have in alternate executions.
%% which, if visible to an adversary in an alternate execution, has a
%% label \emph{lower-bounded} by $\lab$ in the alternate execution.
\end{framed}

\begin{figure}
\begin{align*}
\lab          :=~& L~\arrowvert~M~\arrowvert~\ldots~\arrowvert~H\\
\pc          :=~& \lab \\
k,l,m :=~& \lab~\arrowvert~\lab\pl \\ \\
\lab_1 \sqcup \lab_2\pl  :=~&(\lab_1 \sqcup \lab_2)\pl \\
\lab_1\pl  \sqcup \lab_2\pl  :=~&(\lab_1 \sqcup \lab_2)\pl
\end{align*}
\caption{Labels and label operations}\label{fig:labels}
\end{figure}
%%%% Moved up from the example
\begin{lstlisting}[float, caption=Example explaining rule assn-2,label=lst2]
if ($x'$)
  $z = y_1$@\label{hm1}@
else
  $z = y_2$@\label{hm2}@
if ($x_1$)@\label{bl1}@
  $z = x_1$@\label{hl1}@
if ($\texttt{not}(x_2)$)@\label{if2}@
  $z = x_2$@\label{hl2}@
if ($z$)@\label{if3}@
  $w = z$@\label{hl3}@
\end{lstlisting}
%%%% Moved up from the example
\begin{figure}
\centering
\begin{tikzpicture}
\node (H) at (0,3) {$H$};
\node (M1) at (-1,2) {$M_1$};
%\node (M1) at (0,2) {$M_1$};
\node (M2) at (1,2) {$M_2$};
\node (Lp) at (0,1) {$L'$};
\node (L1) at (-1,1) {$L_1$};
\node (L2) at (1,1) {$L_2$};
\node (L) at (0,0) {$L$};
	
\tikzstyle{every path}=[black] 
\path		(H) edge (M1)
			(H) edge (M2)
			(M1) edge (L1)
			(M1) edge (Lp)
			(M2) edge (Lp)
			(M2) edge (L2)
			(L1) edge (L)
			(Lp) edge (L)
			(L2) edge (L);				
\end{tikzpicture}
\caption{Lattice explaining rule assn-2}\label{fig:lattice}
\end{figure}

%%%% Moved up
\begin{table*}
\centering
\begin{tabular} {|l||@{\,}c@{\,}||@{\,}c@{\,}|@{\,}c@{\,}|}
\hline
&
\multicolumn{3}{c|}{$w = \texttt{false}^{L_1},\ x_1 =
  \texttt{true}^{L_1},\ y_1 = \texttt{false}^{M_1},\ y_2 =
  \texttt{true}^{M_2}$}
\\
\cline{2-4}
&
$x' = \texttt{true}^{L'}$
&
\multicolumn{2}{c|}{$x' = \texttt{false}^{L'}$}
\\
&
$x_2 = \texttt{true}^{L_2}$
&
\multicolumn{2}{c|}{$x_2 = \texttt{false}^{L_2}$}
\\
\cline{3-4}
&
&
assn-2 with $k:= \lab_x\pl$
&
assn-2 with $k:= ((\pc \sqcup m) \sqcap \lab_x)\pl$
\\
\hline
\texttt{if} ($x'$)&if-branch taken, $\pc = L'$&&\\
\quad$z = y_1$&$z = \texttt{false}^{M_1}$&&\\
\texttt{else}&&else-branch taken, $\pc = L'$&else-branch taken, $\pc = L'$\\
\quad$z = y_2$&&$z = \texttt{true}^{M_2}$&$z = \texttt{true}^{M_2}$\\
\texttt{if} ($x_1$)&branch taken, $\pc = L_1$&branch taken, $\pc = L_1$&branch taken, $\pc = L_1$\\
\quad$z = x_1$&$z = \texttt{true}^{L_1}$&$z = \texttt{true}^{M_2\pl}$&$z = \texttt{true}^{L\pl}$\\
\texttt{if} ($\texttt{not}(x_2)$)&branch not taken&branch taken, $\pc = L_2$&branch taken, $\pc = L_2$\\
\quad$z = x_2$& %$z = \texttt{true}^{L_1}$
&$z = \texttt{false}^{L_2}$&$z = \texttt{false}^{L\pl}$\\
\texttt{if} ($z$)&branch taken, $\pc = L_1$&branch not taken&execution halted\\
\quad$w = z$&$w = \texttt{true}^{L_1}$&&\\
\hline
Result & $w = \texttt{true}^{L_1}$ & $w =
\texttt{false}^{L_1}$ (leak) & execution halted (no leak)\\
\hline
\end{tabular}
\caption{Execution steps in two runs of the program from Listing~\ref{lst2}, with two variants of the rule assn-2}
\label{tblassn}
\end{table*}

The syntax of labels is listed in Figure~\ref{fig:labels}. Labels
$k,l,m$ may be lattice elements $\lab$ or $\star$-ed lattice elements
$\lab\pl$. In examples, we use suggestive lattice element names $L, M,
H$ (low, medium, high). Labels of the form $\lab$ are called
$\star$-free or \emph{pure}. Figure~\ref{fig:labels} also defines the
join operation $\sqcup$ on labels, which is used to combine labels of
the arguments of $\odot$. This definition is based on the intuition
above. When the two operands of $\odot$ are labeled $\lab_1$ and
$\lab_2\pl$, $\lab_1 \sqcup \lab_2$ is a lower bound on the pure label
of the resulting value in any execution (because $\lab_2$ is a lower
bound on the pure label of $\lab_2\pl$ in any run). Hence, $\lab_1
\sqcup \lab_2\pl = (\lab_1 \sqcup \lab_2)\pl$. The reason for the
definition $\lab_1\pl \sqcup \lab_2\pl = (\lab_1 \sqcup \lab_2)\pl$ is
similar.

Our rules for assignment are shown in Figure~\ref{fig:assn-our}. They
strictly generalize the rule assn-PUS for the two-point lattice,
treating $P = L\pl$. Rule assn-1 applies when the existing label of
the variable being assigned to is $\lab_x$ or $\lab_x\pl$ and $\pc
\sqsubseteq \lab_x$. The key intuition behind the rule is the
following: If $\pc \sqsubseteq \lab_x$, then it is safe to overwrite
the variable, because $\lab_x$ is necessarily a lower bound on the
(pure) label of $x$ in this and any alternate execution (see the
\framebox{framebox} above). Hence, overwriting the variable cannot
cause an implicit flow. As expected, the label of the overwritten
variable is $\pc \sqcup m$, where $m$ is the label of the value
assigned to $x$.

\footnotetext{In the original paper, $k := (\pc \sqcap \lab_x
  )\pl$ in the rule for assn-2}

Rule assn-2 applies in the remaining case --- when $\pc
\not\sqsubseteq \lab_x$. In this case, there may be an implicit flow,
so the final label on $x$ must have the form $\lab\pl$ for some
$\lab$. The question is which $\lab$? Intuitively, it may seem that
one could choose $\lab = \lab_x$, the pure part of the original label
of $x$. The final label on $x$ would be $\lab_x\pl$ and this would
satisfy the intuitive meaning of $\star$ written in the
\framebox{framebox} above. Indeed, this intuition suffices for the
two-point lattice of Section~\ref{sec:existing}. However, for a more
general lattice, this intuition is unsound, as we illustrate with an
example below. The correct label is $((\pc \sqcup m) \sqcap \lab_x)\pl$. (Note
that this correct label is independent of the label $m$ of the value
assigned to $x$. This is sound because $x$ is $\star$-ed and cannot be
case-analyzed later, so the label on the value in it is irrelevant.)

\paragraph{Example} 
We illustrate why we need the label $k := ((\pc \sqcup m) \sqcap \lab_x)\pl$
instead of $k := \lab_x\pl$ in rule assn-2. Consider the lattice of
Figure~\ref{fig:lattice} and the program of Listing~\ref{lst2}. Assume
that, initially, the variables $z$, $w$, $x_1$, $x'$, $x_2$, $y_1$ and
$y_2$ have labels $H$, $L_1$, $L_1$, $L'$, $L_2$, $M_1$ and $M_2$,
respectively. Fix the attacker at level $L_1$. Fix the value of $x_1$
at $\texttt{true}^{L_1}$, so that the branch on line~\ref{bl1} is
always taken and line~\ref{hl1} is always executed. Set $y_1 \mapsto
\texttt{false}^{M_1}, y_2 \mapsto \texttt{true}^{M_2}, w \mapsto
false^{L_1}$ initially. The initial value of $z$ is
irrelevant. Consider two executions of the program starting from two
stores $\sigma_1$ with $x' \mapsto \texttt{true}^{L'}, x_2 \mapsto
\texttt{true}^{L_2}$ and $\sigma_2$ with $x' \mapsto
\texttt{false}^{L'}, x_2 \mapsto \texttt{false}^{L_2}$. Note that
because $L'$ and $L_2$ are incomparable to $L_1$ in the lattice,
$\sigma_1$ and $\sigma_2$ are equivalent for $L_1$. 

We show that requiring $k := \lab_x\pl$ in rule assn-2 causes an
implicit flow that is observable for $L_1$. The intermediate values
and labels of the variables for executions starting from $\sigma_1$
and $\sigma_2$ are shown in the second and third columns of
Table~\ref{tblassn}. Starting with $\sigma_1$, line~\ref{hm1} is
executed, but line~\ref{hm2} is not, so $z$ ends with
$\texttt{false}^{M_1}$ at line~\ref{bl1} (rule assn-1 applies at
line~\ref{hm1}). At line~\ref{hl1}, $z$ contains $\texttt{true}^{L_1}$
(again by rule assn-1) and line~\ref{hl2} is not executed. Thus, the
branch on line~\ref{if3} is taken and $w$ ends with
$\texttt{true}^{L_1}$ at line~\ref{hl3}. Starting with $\sigma_2$,
line~\ref{hm1} is not executed, but line~\ref{hm2} is, so $z$ becomes
$\texttt{true}^{M_2}$ at line~\ref{bl1} (rule assn-1 applies at
line~\ref{hm2}). At line~\ref{hl1}, rule assn-2 applies, but because
we assume that $k := \lab_x\pl$ in that rule, $z$ now contains the
value $\texttt{true}^{M_2\pl}$. As the branch on line~\ref{if2} is
taken, at line~\ref{hl2}, $z$ becomes $\texttt{false}^{L_2}$ by rule
assn-1 because $L_2 \sqsubseteq M_2$. Thus, the branch on
line~\ref{if3} is not taken and $w$ ends with $\texttt{false}^{L_1}$
in this execution. Hence, $w$ ends with $\texttt{true}^{L_1}$ and
$\texttt{false}^{L_1}$ in the two executions, respectively. The
attacker at level $L_1$ can distinguish these two results; hence, the
program leaks the value of $x'$ and $x_2$ to $L_1$.

With the correct assn-2 rule in place, this leak is avoided (last
column of Table~\ref{tblassn}). In that case, after the assignment on
line~\ref{hl1} in the second execution, $z$ has label $((L_1 \sqcup L_1) \sqcap
M_2)\pl = L\pl$. Subsequently, after line~\ref{hl2}, $z$ gets the
label $L\pl$. As case analysis on a $\star$-ed value is not allowed,
the execution is halted on line~\ref{if3}. This guarantees
termination-insensitive noninterference with respect to the attacker
at level $L_1$.

\subsection{Store equivalence}

To prove noninterference for our generalized permissive-upgrade, we
define equivalence of labeled values relative to an adversary at
arbitrary lattice level $\lab$. The definition is shown below. We
explain later how it is obtained, but we note that clauses (3)--(5)
here refine clause (3) of Definition~\ref{def:eq-existing} for the
two-point lattice. The obvious generalization of clause (3) of
Definition~\ref{def:eq-existing} --- $n_1^k \sim_\lab n_2^m$ whenever
either $k$ or $m$ is $\star$-ed --- is too coarse to allow us to prove
noninterference inductively. For the degenerate case of the two-point
lattice, this definition also degenerates to
Definition~\ref{def:eq-existing} (there, $\lab$ is fixed at $L$, $P =
L\pl$ and only $L$ may be $\star$-ed).

\begin{mydef}
\label{def1}
Two values $n_1^k$ and $n_2^m$ are $\lab$-equivalent, written $n_1^k
\sim_\lab n_2^m$, iff either
\begin{enumerate}
\item $k = m = \lab' \sqsubseteq \lab$ and $n_1 = n_2$, or
\item $ k = \lab'
  \not\sqsubseteq \lab$ and $m = \lab'' \not\sqsubseteq \lab$, or 
\item $k = \lab_1\pl$ and $m = \lab_2\pl$, or
\item $k = \lab_1\pl$ and $m = \lab_2$ and ($\lab_2 \not\sqsubseteq
  \lab$ or $\lab_1 \sqsubseteq \lab_2 $), or
\item $k = \lab_1$ and $m = \lab_2\pl$ and ($\lab_1 \not\sqsubseteq
  \lab$ or $\lab_2 \sqsubseteq \lab_1$)
\end{enumerate}
\end{mydef}

We obtained this definition by constructing (through examples) an
extensive transition graph of pairs of labels that may be assigned to
a single variable at corresponding program points in two executions of
the same program. Our starting point is label-pairs of the form
$(\lab, \lab)$. We discovered that this characterization of
equivalence is both sufficient and necessary. It is sufficient in the
sense that it allows us to prove TINI inductively. It is necessary in
the sense that example programs can be constructed that end in states
exercising every possible clause of this definition. A technical
appendix, available from the authors' homepages, lists these examples.

\subsection{Termination-Insensitive Noninterference}

Using the above definition of equivalence of labeled values, we can
prove TINI for our generalized permissive-upgrade. A significant
difficulty in proving the theorem is that our definition of
$\sim_\lab$ is not transitive. The same problem arises for the
two-point lattice in~\cite{plas10}. There, the authors resolve the
issue by defining a special relation called evolution. Here, we follow
a more conventional approach based on the standard confinement
lemma. The need for evolution is averted using several auxiliary
lemmas that we list below. Detailed proofs of all lemmas and theorems
are presented in our technical appendix.

\begin{myLemma}[Expression evaluation]
\label{exp}
If $\langle e, \sigma_1 \rangle \Downarrow n_1^{k_1}$ and $\langle e,
\sigma_2 \rangle \Downarrow n_2^{k_2}$ and $\sigma_1 \sim_\lab \sigma_2$,
then $n_1^{k_1} \sim_\lab n_2^{k_2}$.
\end{myLemma}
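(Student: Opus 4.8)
The plan is to prove the Expression evaluation lemma by structural induction on the expression $e$, following the grammar $e ::= n \mid x \mid e_1 \odot e_2$. The goal is to show that evaluating the same expression in two $\lab$-equivalent stores yields $\lab$-equivalent labeled values. I expect the induction itself to be routine for the base cases but the operator case to require genuine case analysis driven by the somewhat intricate (and non-transitive) Definition~\ref{def1}.

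For the base cases, first I would handle constants: by rule \emph{const}, both evaluations give $n^{\perp}$ with the same value $n$ and label $\perp = L \sqsubseteq \lab$, so clause (1) of Definition~\ref{def1} applies immediately. For variables, by rule \emph{var} we have $n_i^{k_i} = \sigma_i(x)$, and the hypothesis $\sigma_1 \sim_\lab \sigma_2$ gives $\sigma_1(x) \sim_\lab \sigma_2(x)$ directly by the definition of store equivalence; this is exactly the conclusion we need.

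The inductive case $e = e_1 \odot e_2$ is where the real work lies. Here rule \emph{oper} computes $n_i = n_i' \odot n_i''$ and $k_i = k_i' \sqcup k_i''$ from the sub-evaluations, and by the induction hypotheses I obtain $n_1'^{k_1'} \sim_\lab n_2'^{k_2'}$ and $n_1''^{k_1''} \sim_\lab n_2''^{k_2''}$. The hard part will be showing that $\lab$-equivalence is preserved under the join operation: I must verify that if two pairs of labeled values are each $\lab$-equivalent, then joining the labels componentwise (and applying $\odot$ to the values) yields an $\lab$-equivalent result. Concretely, I would argue that whenever both operands satisfy clause (1) (pure labels $\sqsubseteq \lab$ with equal values), the joined label stays $\star$-free, and $k_1' \sqcup k_1'' = k_2' \sqcup k_2''$ with equal resulting values, so clause (1) still holds; if any operand is $\star$-ed (triggering clause (3), (4), or (5)) then, since joining with a $\star$-ed label produces a $\star$-ed label (Figure~\ref{fig:labels}), at least one of $k_1, k_2$ becomes $\star$-ed and I can land in clause (3), (4), or (5) of the definition.

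The main obstacle I anticipate is the bookkeeping in the mixed cases, where one operand is $\star$-free and visible (clause (1)) while the other is $\star$-ed (clause (4) or (5)): I must track how the join interacts with the side-conditions ``$\lab_i \not\sqsubseteq \lab$ or $\lab_j \sqsubseteq \lab_i$'' to confirm that the joined labels $k_1 = \lab_1\pl$ and $k_2 = \lab_2\pl$ (or one pure, one starred) satisfy the appropriate clause. Because joining two $\star$-ed labels always yields a $\star$-ed label, I expect the common outcome to be clause (3), which carries no side-condition and is therefore the easiest to discharge; the genuinely delicate subcases are those where exactly one of the final labels remains pure, forcing me to check clause (4) or (5)'s disjunctive side-condition explicitly. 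Throughout, I would lean on the fact that $\sqcup$ is monotone and that $\lab_i \sqsubseteq \lab$ is preserved under joins with other labels $\sqsubseteq \lab$, so that visibility to the adversary is tracked correctly. I do not expect to need transitivity of $\sim_\lab$ anywhere in this proof, which is why the lemma can be established cleanly by induction.
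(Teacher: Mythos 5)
Your proposal is correct and follows essentially the same route as the paper's own proof: structural induction on $e$, with the \emph{const} and \emph{var} cases immediate and the \emph{oper} case discharged by case analysis on which clauses of Definition~\ref{def1} the two pairs of sub-results satisfy, using that joins with $\star$-ed labels stay $\star$-ed, that $a \not\sqsubseteq \lab$ implies $a \sqcup b \not\sqsubseteq \lab$, and monotonicity of $\sqcup$ for the disjunctive side-conditions of clauses (4) and (5). In fact, your sketch of the mixed pure/$\star$-ed subcases is more explicit than the paper's proof, which simply states the double case analysis without elaborating it.
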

\begin{proof}
By induction on $e$.
\end{proof}

%% Lemma~\ref{sup1} shows that a value labeled $\lab\pl$ remains
%% $\star$-ed if an evaluation is done on it under a higher or unrelated
%% context. Lemma~\ref{pcl} on the other hand shows that if a value gets
%% a pure label after an evaluation, then either the value remained
%% unchanged after the evaluation or the evaluation done was done in a
%% lower context. We state two important corollaries,
%% Corollary~\ref{cor1} and~\ref{cor2} derived from Lemma~\ref{sup1}
%% and~\ref{pcl}, respectively, which are used in the proofs for
%% confinement and non-interference.

\begin{myLemma}[$\star$-preservation]
\label{sup1}
If $\langle c, \sigma \rangle \Downarrow_\pc \sigma'$ and
$\Gamma(\sigma(x)) = \lab\pl $ and $\pc \not\sqsubseteq \lab$, then
$\Gamma(\sigma'(x)) = \lab'\pl$ and $\lab' \sqsubseteq \lab$.
\end{myLemma}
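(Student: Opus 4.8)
The plan is to prove the statement by induction on the derivation of $\langle \comm, \sigma \rangle \Downarrow_\pc \sigma'$, keeping the variable $x$ fixed but letting $\pc$, $\lab$, and the stores vary, so that the statement itself serves directly as the induction hypothesis (no strengthening is needed). Two monotonicity facts drive the argument. First, descending into a branch of an \texttt{if} or \texttt{while} only ever replaces $\pc$ by $\pc \sqcup \lab_c \sqsupseteq \pc$; since $\pc \not\sqsubseteq \lab$, any $\pc' \sqsupseteq \pc$ also satisfies $\pc' \not\sqsubseteq \lab$, so the hypothesis $\pc \not\sqsubseteq \lab$ propagates unchanged into sub-derivations. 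Second, I will show that the pure lower bound attached to $x$ can only move down the lattice, so the conclusion is closed under the chaining needed for composite commands.

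The heart of the proof is the assignment base case $x := \expr$ (an assignment to a variable other than $x$ leaves $\sigma(x)$ untouched and is immediate). Since $\Gamma(\sigma(x)) = \lab\pl$ is $\star$-ed, the premise $l = \lab_x \vee l = \lab_x\pl$ forces $l = \lab_x\pl$ and hence $\lab_x = \lab$. Rule \emph{assn-1} carries the premise $\pc \sqsubseteq \lab_x = \lab$, which directly contradicts $\pc \not\sqsubseteq \lab$, so \emph{assn-1} cannot fire and this subcase is vacuous. Only \emph{assn-2} applies, and it assigns $x$ the label $k = ((\pc \sqcup m) \sqcap \lab_x)\pl = ((\pc \sqcup m) \sqcap \lab)\pl$. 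This is of the required form $\lab'\pl$ with $\lab' = (\pc \sqcup m) \sqcap \lab$, and $\lab' \sqsubseteq \lab$ holds because a meet with $\lab$ is always below $\lab$. This establishes both halves of the conclusion in the base case and confirms the intuition that a $\star$-ed variable can only have its pure lower bound lowered.

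The cases for \texttt{if-else} and \emph{while-f} are then immediate from the first monotonicity fact together with the induction hypothesis (and, for \emph{while-f}, from $\sigma' = \sigma$). The step I expect to be the main obstacle is the sequencing rule (and, identically, the recursive \emph{while-t} rule), because there the induction hypothesis must be re-applied with a strictly updated bound. Concretely, from $\langle \comm_1, \sigma \rangle \Downarrow_\pc \sigma''$ the induction hypothesis gives $\Gamma(\sigma''(x)) = \lab''\pl$ with $\lab'' \sqsubseteq \lab$; to invoke it on $\comm_2$ I must first check its precondition $\pc \not\sqsubseteq \lab''$, which follows because $\pc \sqsubseteq \lab''$ together with $\lab'' \sqsubseteq \lab$ would give $\pc \sqsubseteq \lab$, contradicting the hypothesis. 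The induction hypothesis on $\comm_2$ then yields $\Gamma(\sigma'(x)) = \lab'\pl$ with $\lab' \sqsubseteq \lab''$, and transitivity of $\sqsubseteq$ closes the chain $\lab' \sqsubseteq \lab'' \sqsubseteq \lab$.

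The whole argument is thus a routine induction once the two monotonicity observations are isolated: $\pc$ only rises when entering branches, the pure lower bound only falls under assignment, and the negated-order hypothesis $\pc \not\sqsubseteq (\cdot)$ survives both of these movements. Notably, this lemma needs neither the store-equivalence Definition~\ref{def1} nor the expression-evaluation Lemma~\ref{exp}; it is a purely intra-execution invariant about how $\star$-ed labels evolve, which is exactly why it can later be used as an auxiliary fact in place of the evolution relation of~\cite{plas10}.
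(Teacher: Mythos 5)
Your proof is correct and follows essentially the same route as the paper's own: induction on the derivation, with assn-1 ruled out as vacuous under $\pc \not\sqsubseteq \lab$, assn-2 yielding $((\pc \sqcup m) \sqcap \lab)\pl$, branch cases handled by monotonicity of $\pc$ under joins, and seq/while-t handled by chaining the induction hypothesis with transitivity of $\sqsubseteq$. In fact you are slightly more explicit than the paper in the sequencing case, where you justify the precondition $\pc \not\sqsubseteq \lab''$ before re-applying the induction hypothesis---a step the paper leaves implicit.
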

\begin{proof}
By induction on the given derivation.
\end{proof}

\begin{mycor}
\label{cor1}
If $\langle c, \sigma \rangle \Downarrow_\pc \sigma'$ and
$\Gamma(\sigma(x)) = \lab\pl $ and $\Gamma(\sigma'(x)) = \lab'$, then
$\pc \sqsubseteq \lab$.
\end{mycor}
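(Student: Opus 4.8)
The plan is to prove this purely by contraposition, treating the corollary as a trivial repackaging of the already-established Lemma~\ref{sup1}. The only content I need is the observation baked into the paper's notational convention: in the hypothesis $\Gamma(\sigma'(x)) = \lab'$, the metavariable $\lab'$ ranges over \emph{pure} (i.e.\ $\star$-free) lattice elements, so the hypothesis asserts that the final label of $x$ carries no star. Lemma~\ref{sup1}, on the other hand, says precisely that if the initial label $\Gamma(\sigma(x)) = \lab\pl$ is starred and $\pc \not\sqsubseteq \lab$, then the final label $\Gamma(\sigma'(x))$ is \emph{also} starred (of the form $\lab''\pl$). These two outcomes are mutually exclusive, and that tension is the entire argument.

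Concretely, I would argue as follows. Assume, toward a contradiction, that $\pc \not\sqsubseteq \lab$. Given the two standing hypotheses $\langle c, \sigma\rangle \Downarrow_\pc \sigma'$ and $\Gamma(\sigma(x)) = \lab\pl$, the situation matches exactly the premises of Lemma~\ref{sup1}, whose conclusion then yields $\Gamma(\sigma'(x)) = \lab''\pl$ for some lattice element $\lab''$ (with $\lab'' \sqsubseteq \lab$, though that refinement is not needed here). But the corollary's third hypothesis states $\Gamma(\sigma'(x)) = \lab'$, a $\star$-free label. Since a label cannot be simultaneously of the form $\lab''\pl$ and $\star$-free, we have a contradiction. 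Therefore the assumption fails and $\pc \sqsubseteq \lab$ holds, as required.

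There is no real obstacle: all of the inductive work over the derivation of $\langle c, \sigma\rangle \Downarrow_\pc \sigma'$ has already been discharged inside Lemma~\ref{sup1}, and the corollary inherits it wholesale. If I wanted to avoid an explicit appeal to contraposition I could instead reason by cases on whether $\pc \sqsubseteq \lab$, discarding the $\pc \not\sqsubseteq \lab$ branch via Lemma~\ref{sup1}; but the contrapositive phrasing is cleaner. The only point that deserves a sentence of care in the write-up is making the reliance on the $\star$-free reading of $\lab'$ explicit, so that it is clear why the conclusion of Lemma~\ref{sup1} is incompatible with the given form of $\Gamma(\sigma'(x))$.
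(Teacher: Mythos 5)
Your proof is correct and is exactly the argument the paper intends: the paper's proof of Corollary~\ref{cor1} is simply ``Immediate from Lemma~\ref{sup1}'', and the contrapositive reading you spell out (a starred initial label plus $\pc \not\sqsubseteq \lab$ would force a starred final label, contradicting the $\star$-free hypothesis on $\Gamma(\sigma'(x))$) is precisely how that immediacy cashes out.
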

\begin{proof}
Immediate from Lemma~\ref{sup1}.
\end{proof}

\begin{myLemma}[$\pc$-lemma]
\label{pcl}
If $\langle c, \sigma \rangle \Downarrow_\pc \sigma'$ and
$\Gamma(\sigma'(x)) = \lab$, then $\sigma(x) = \sigma'(x)$ or
$\pc \sqsubseteq \lab$.
\end{myLemma}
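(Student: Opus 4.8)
The plan is to proceed by induction on the derivation of $\langle c, \sigma \rangle \Downarrow_\pc \sigma'$, treating $x$ as an arbitrary but fixed variable. The conclusion is a disjunction, so in every case I will either exhibit that $x$ was untouched (establishing the left disjunct $\sigma(x) = \sigma'(x)$) or show $\pc \sqsubseteq \lab$. The guiding observations are that the only rules that can change a variable's labeled value are the two assignment rules, and that in every structural rule the sub-derivations run under a $\pc$ that is no smaller than the ambient one. For any command other than an assignment to $x$ itself, the value of $x$ is either copied verbatim or produced by a sub-execution, so the burden falls on the assignment base cases and on propagating the hypothesis $\Gamma(\sigma'(x)) = \lab$ backward through the compound cases.

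First I would dispatch the assignment cases, which are the heart of the argument. If the assigned variable is not $x$, then $\sigma(x) = \sigma'(x)$ and the left disjunct holds. If the assigned variable is $x$, I split on which rule fired. In assn-1 the new label is $k := \pc \sqcup m$; by the join definition of Figure~\ref{fig:labels}, $k$ is pure exactly when $m$ is pure (a $\star$-ed $m$ forces a $\star$-ed result). When $m$ is $\star$-ed, $\Gamma(\sigma'(x)) = k$ is not of the form $\lab$, contradicting the hypothesis, so this subcase is vacuous; when $m$ is pure, $\lab = k = \pc \sqcup m \sqsupseteq \pc$, giving $\pc \sqsubseteq \lab$ directly. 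In assn-2 the new label is $((\pc \sqcup m) \sqcap \lab_x)\pl$, which is always $\star$-ed, so again $\Gamma(\sigma'(x)) = \lab$ cannot hold and the case is vacuous. Thus every way of changing $x$ to a pure label forces $\pc \sqsubseteq \lab$.

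For the structural rules I would use the induction hypothesis together with monotonicity of $\sqcup$. In the if-else cases the taken branch executes under $\pc \sqcup \lab'$, where $\lab'$ is the (pure) label of the condition; the IH yields either $\sigma(x) = \sigma'(x)$ or $\pc \sqcup \lab' \sqsubseteq \lab$, and in the latter case $\pc \sqsubseteq \pc \sqcup \lab' \sqsubseteq \lab$. The while-f case leaves the store unchanged, so the left disjunct is immediate. The seq and while-t cases have two sub-derivations separated by an intermediate store $\sigma''$ (with the while body and the recursive while both running at $\pc \sqcup \lab'$); here I apply the IH to the second sub-derivation first. If it reports $\pc \sqsubseteq \lab$ (after the same join-monotonicity step) I am done; otherwise it reports $\sigma''(x) = \sigma'(x)$, which lets me transport $\Gamma(\sigma''(x)) = \Gamma(\sigma'(x)) = \lab$ and apply the IH to the first sub-derivation, chaining the two ``unchanged'' facts into $\sigma(x) = \sigma'(x)$ when neither step raises $\pc$.

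The step I expect to require the most care is this backward transport of the pure-label hypothesis across the compound cases: the conclusion must be re-established for the intermediate store $\sigma''$ before the IH on the first premise is applicable, and this is only legitimate because an unchanged $x$ carries both its value and its label unchanged. The genuinely new, lattice-specific ingredient relative to the two-point setting is the bookkeeping in assn-1 that a pure output label can arise only from a pure $m$ and then necessarily dominates $\pc$, while assn-2 can never yield a pure label; everything else is the standard confinement-style induction.
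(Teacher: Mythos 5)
Your proposal is correct and follows essentially the same route as the paper's proof: induction on the evaluation derivation with case analysis on the last rule, applying the induction hypothesis to the second premise first in the \emph{seq} and \emph{while-t} cases and transporting the pure-label hypothesis back through an unchanged intermediate store. Your treatment of the assignment base cases is in fact slightly more explicit than the paper's (which compresses them into ``$\lab = \pc \sqcup \lab_e$''), since you spell out why assn-2 and a $\star$-ed expression label in assn-1 make the hypothesis vacuous; this is a welcome clarification but not a different argument.
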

\begin{proof}
By induction on the given derivation.
\end{proof}

\begin{mycor}
\label{cor2}
If $\langle c, \sigma \rangle \Downarrow_\pc \sigma'$ and
$\Gamma(\sigma(x)) = \lab\pl $ and $\Gamma(\sigma'(x)) = \lab'$, then
$\pc \sqsubseteq \lab'$.
\end{mycor}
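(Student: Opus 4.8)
The plan is to derive the corollary directly from the $\pc$-lemma (Lemma~\ref{pcl}), in exactly the way Corollary~\ref{cor1} falls out of Lemma~\ref{sup1}. I would instantiate the $\pc$-lemma with the same command, store transition, and variable, using the hypothesis $\Gamma(\sigma'(x)) = \lab'$ that the \emph{final} label of $x$ is the pure label $\lab'$. Since the $\pc$-lemma's premise requires precisely a pure final label $\lab$, it applies with $\lab := \lab'$ and hands us a disjunction: either $\sigma(x) = \sigma'(x)$, or $\pc \sqsubseteq \lab'$.

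The second disjunct is exactly the conclusion we are after, so the only work left is to discharge the first. Here I would invoke the remaining hypothesis, $\Gamma(\sigma(x)) = \lab\pl$. If $\sigma(x) = \sigma'(x)$ held, then in particular their labels would coincide, forcing $\lab\pl = \lab'$. But $\lab\pl$ is a $\star$-ed label while $\lab'$ is $\star$-free (pure), and by the syntax of labels in Figure~\ref{fig:labels} these two classes are disjoint. Hence this equality is impossible and the branch $\sigma(x) = \sigma'(x)$ cannot occur. The surviving possibility is $\pc \sqsubseteq \lab'$, as required.

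There is essentially no obstacle in this argument once the $\pc$-lemma is available; the one point worth noting is that it is precisely the \emph{starredness} of the initial label $\lab\pl$ that rules out the ``value unchanged'' branch of the $\pc$-lemma. In other words, the corollary captures the slogan that a variable can only pass from a partially-leaked state to a pure state by being genuinely overwritten, and any overwrite that lowers the pc-context below the new label is forbidden by the $\pc$-lemma. All the real inductive content has already been spent in proving Lemma~\ref{pcl}, so this step reduces to a short case split with a syntactic impossibility, mirroring the immediate derivation of Corollary~\ref{cor1} from Lemma~\ref{sup1}.
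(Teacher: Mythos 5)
Your proof is correct and matches the paper's approach exactly: the paper derives Corollary~\ref{cor2} as ``immediate from Lemma~\ref{pcl}'', and your argument simply spells out why it is immediate --- apply the $\pc$-lemma with $\lab := \lab'$ and discharge the $\sigma(x) = \sigma'(x)$ branch via the syntactic disjointness of $\star$-ed and pure labels. Nothing further is needed.
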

\begin{proof}
Immediate from Lemma~\ref{pcl}.
\end{proof}

Using these lemmas, we can prove the standard confinement lemma and
noninterference. 

\begin{myLemma}[Confinement Lemma]
\label{conf}
If $\pc \not\sqsubseteq \lab$ and $\langle c, \sigma \rangle
\Downarrow_\pc \sigma'$, then $\sigma \sim_\lab \sigma'$.
\end{myLemma}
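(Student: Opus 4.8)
The plan is to prove the lemma by induction on the derivation of $\langle \comm, \sigma \rangle \Downarrow_\pc \sigma'$, fixing the adversary level $\lab$ throughout and establishing $\sigma(x) \sim_\lab \sigma'(x)$ separately for each variable $x$. For variables that $\comm$ leaves untouched we have $\sigma(x) = \sigma'(x)$, and a quick inspection of Definition~\ref{def1} shows that $\sim_\lab$ is reflexive (clause (1), (2) or (3) applies according to whether the label is pure and below $\lab$, pure and not below $\lab$, or $\star$-ed), so such variables are immediate. The only subtlety for the branching rules is that the body of an \texttt{if} or \texttt{while} executes under $\pc \sqcup \lab_{\text{cond}} \sqsupseteq \pc$, which still satisfies $\pc \sqcup \lab_{\text{cond}} \not\sqsubseteq \lab$, so the induction hypothesis applies to every sub-derivation.

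The base cases are the two assignment rules, which I would check directly against Definition~\ref{def1}. In assn-1 the premise $\pc \sqsubseteq \lab_x$ together with $\pc \not\sqsubseteq \lab$ forces $\lab_x \not\sqsubseteq \lab$; writing the old label as $l$ and the new label as $k = \pc \sqcup m$, the pure part of $k$ is then also $\not\sqsubseteq \lab$, so the pair $(l,k)$ falls into the appropriate clause among (2)--(5) regardless of which of $l,k$ is $\star$-ed. In assn-2 the new label $k = ((\pc \sqcup m) \sqcap \lab_x)\pl$ is always $\star$-ed with pure part $\sqsubseteq \lab_x$; if the old label is $\star$-ed, clause (3) applies, and if it is pure we use clause (5), whose second disjunct $\lab_2 \sqsubseteq \lab_1$ holds precisely because of the meet with $\lab_x$. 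This is exactly why the correct label in assn-2 is $((\pc \sqcup m) \sqcap \lab_x)\pl$ rather than $\lab_x\pl$.

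The main obstacle is the sequencing rule (and, analogously, while-t), because $\sim_\lab$ is not transitive: the induction hypotheses give $\sigma \sim_\lab \sigma''$ and $\sigma'' \sim_\lab \sigma'$, but these do not compose directly. I would argue per variable, ruling out the single dangerous configuration in which $x$ has a pure label $\lab_1 \sqsubseteq \lab$ in $\sigma$ and a pure label $\lab_2 \not\sqsubseteq \lab$ in $\sigma'$ (all other shape combinations of the three labels fall directly into some clause of Definition~\ref{def1}). In that configuration, $\sigma(x) \sim_\lab \sigma''(x)$ forces the intermediate label to be either unchanged and pure-low, or $\star$-ed with pure part $\mu \sqsubseteq \lab_1 \sqsubseteq \lab$; the first option contradicts $\sigma''(x) \sim_\lab \sigma'(x)$, since a pure-low value is inequivalent to a pure value not below $\lab$. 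Hence $x$ is $\star$-ed in $\sigma''$ and pure in $\sigma'$, and Corollary~\ref{cor1} applied to the derivation for $\comm_2$ gives $\pc \sqsubseteq \mu \sqsubseteq \lab$, contradicting $\pc \not\sqsubseteq \lab$. So the dangerous case is impossible and $\sigma(x) \sim_\lab \sigma'(x)$ holds in every case.

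I expect the bulk of the effort to be the bookkeeping in this step: enumerating the pure/$\star$-ed shapes of the three labels $\Gamma(\sigma(x))$, $\Gamma(\sigma''(x))$, $\Gamma(\sigma'(x))$ and invoking the right auxiliary lemma (Lemma~\ref{sup1}, the $\pc$-lemma~\ref{pcl}, and Corollaries~\ref{cor1} and~\ref{cor2}) to discharge each. The while-t case reduces to the same composition argument, since it threads the loop body and the residual loop through an intermediate store exactly as $\comm_1;\comm_2$ does, again under a $\pc$ that remains $\not\sqsubseteq \lab$.
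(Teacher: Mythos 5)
Your overall strategy --- induction on the derivation, per-variable reasoning, direct checks against Definition~\ref{def1} for the two assignment rules, monotonicity of $\pc$ for the branching rules, and auxiliary lemmas to compensate for the non-transitivity of $\sim_\lab$ in the sequencing case --- is exactly the paper's, and your treatment of assn-1, assn-2, if-else and the $\pc$-monotonicity point is correct. The gap is the parenthetical claim on which your sequencing argument rests: it is \emph{not} true that the pair (pure $\lab_1 \sqsubseteq \lab$ in $\sigma$, pure $\lab_2 \not\sqsubseteq \lab$ in $\sigma'$) is the single dangerous configuration, with every other shape combination falling directly into some clause of Definition~\ref{def1}. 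Two counterexamples, both fully consistent with the two induction hypotheses: (a) $\Gamma(\sigma(x)) = \lab_1$ pure with $\lab_1 \sqsubseteq \lab$, $\Gamma(\sigma''(x)) = \lab_2\pl$ with $\lab_2 \sqsubseteq \lab_1$ (clause (5)), and $\Gamma(\sigma'(x)) = \lab_3\pl$ (clause (3)); the conclusion requires clause (5), i.e.\ $\lab_3 \sqsubseteq \lab_1$, which does not follow from the induction hypotheses --- one needs Lemma~\ref{sup1} ($\star$-preservation), applied to the $\comm_2$ derivation with $\pc \not\sqsubseteq \lab_2$, to obtain $\lab_3 \sqsubseteq \lab_2 \sqsubseteq \lab_1$. (b) $\Gamma(\sigma(x)) = \lab_1$ pure with $\lab_1 \not\sqsubseteq \lab$, $\Gamma(\sigma''(x)) = \lab_2\pl$ (clause (5), first disjunct), and $\Gamma(\sigma'(x)) = \lab_3$ pure with $\lab_2 \sqsubseteq \lab_3$ (clause (4), second disjunct); the conclusion requires clause (2), i.e.\ $\lab_3 \not\sqsubseteq \lab$, which again is not automatic --- one needs Corollary~\ref{cor1} to get $\pc \sqsubseteq \lab_2 \sqsubseteq \lab_3$, whence $\lab_3 \sqsubseteq \lab$ would contradict $\pc \not\sqsubseteq \lab$. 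The analogous situation with $\Gamma(\sigma(x))$ $\star$-ed and $\Gamma(\sigma'(x))$ pure needs the same treatment.

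So the non-transitivity problem is broader than the single pure-low-to-pure-high transition you rule out. The paper's proof of this lemma discharges roughly half a dozen such sub-cases, using Corollary~\ref{cor1} sometimes for a contradiction (exactly as you do) and sometimes \emph{positively}, to conclude that the final pure label is not below $\lab$, and using Lemma~\ref{sup1} to bound the pure part of a final $\star$-ed label. Your closing paragraph does anticipate invoking these lemmas case by case, and every tool needed for the repair is already on your table, so the plan is salvageable; but as written, the step ``all other shape combinations fall directly into some clause'' is false and would collapse the moment you carried out the enumeration, in both the seq and while-t cases.
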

\begin{proof}
By induction on the given derivation.
\end{proof}

\begin{myThm}[TINI for generalized permissive-upgrade]
  If $~\sigma_1 \sim_\lab \sigma_2$ and $\langle c, \sigma_1 \rangle
  \Downarrow_\pc \sigma_1' $ and $\langle c, \sigma_2
  \rangle\Downarrow_\pc \sigma_2' $, then $\sigma_1' \sim_\lab
  \sigma_2'$.
\end{myThm}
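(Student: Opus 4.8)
The plan is to prove the theorem by induction on the derivation of the first execution $\langle \comm, \sigma_1 \rangle \Downarrow_\pc \sigma_1'$, inverting the second derivation $\langle \comm, \sigma_2 \rangle \Downarrow_\pc \sigma_2'$ at each step. Since both runs execute the same command $\comm$, the two derivations share their top-level shape except at conditionals and loops, where the guard may steer the runs into different branches or different numbers of iterations; these are the delicate cases. Sequencing $\comm_1;\comm_2$ is routine: the induction hypothesis applied to $\comm_1$ gives $\sigma_1'' \sim_\lab \sigma_2''$ for the intermediate stores, and a second application to $\comm_2$ gives $\sigma_1' \sim_\lab \sigma_2'$. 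Note that this chains two hypotheses about distinct store pairs and so does \emph{not} require transitivity of $\sim_\lab$.

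For assignment $x := \expr$ I would first apply Lemma~\ref{exp} to the common expression to obtain $n_1^{m_1} \sim_\lab n_2^{m_2}$ for the assigned values. Writing $\lab_{x,i}$ for the pure part of $\Gamma(\sigma_i(x))$, the rule firing in run $i$ is assn-1 when $\pc \sqsubseteq \lab_{x,i}$ and assn-2 otherwise, so the resulting label is $\pc \sqcup m_i$ or $((\pc \sqcup m_i)\sqcap \lab_{x,i})\pl$ respectively. The proof then reduces to a finite case analysis over the four rule combinations, checking in each that $n_1^{k_1}$ and $n_2^{k_2}$ satisfy one of the clauses of Definition~\ref{def1}; the hypothesis $\sigma_1(x)\sim_\lab\sigma_2(x)$ is what controls the mixed case in which the two runs pick different assignment rules because $\lab_{x,1}\neq\lab_{x,2}$. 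This is exactly where the precise form of the assn-2 label and clauses (4)--(5) of the equivalence get exercised.

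For conditionals, and analogously loops, I would split on whether the adversary can observe the guard. Both if-rules force the guard label to be pure, say $\lab_i'$ in run $i$, so by Lemma~\ref{exp} these labels satisfy either clause (1) of Definition~\ref{def1} ($\lab_1'=\lab_2'\sqsubseteq\lab$ with equal guard values) or clause (2) ($\lab_1',\lab_2'\not\sqsubseteq\lab$). In the first case both runs take the same branch under the same elevated context $\pc\sqcup\lab_i'$, and the induction hypothesis applied to that branch, and for loops to the body and then to the residual \texttt{while}, delivers the conclusion directly. In the second case the branch context satisfies $\pc\sqcup\lab_i'\not\sqsubseteq\lab$ in each run, so the Confinement Lemma (Lemma~\ref{conf}) gives $\sigma_1\sim_\lab\sigma_1'$ and $\sigma_2\sim_\lab\sigma_2'$ regardless of which branch each run takes. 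For a \texttt{while} run under an unobservable guard the outer $\pc$ may be low, so I would first reassemble the body and recursive loop of the while-t instance into a single sequential derivation running at the elevated context $\pc\sqcup\lab_i'\not\sqsubseteq\lab$, and apply confinement to that; a while-f run gives $\sigma_i'=\sigma_i$ trivially.

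The main obstacle is this last case: from $\sigma_1\sim_\lab\sigma_2$, $\sigma_1\sim_\lab\sigma_1'$, and $\sigma_2\sim_\lab\sigma_2'$ I must conclude $\sigma_1'\sim_\lab\sigma_2'$, yet $\sim_\lab$ is not transitive, so the three facts cannot simply be composed. I would discharge this with a per-variable gluing argument that uses the auxiliary lemmas to pin down how each variable's label may evolve under a high context. Concretely, for a fixed $x$, Lemma~\ref{sup1} shows that a $\star$-ed label stays $\star$-ed with a non-increasing pure part; Corollaries~\ref{cor1} and~\ref{cor2} show that any transition from a $\star$-ed label to a pure label $\lab'$ forces $\pc\sqsubseteq\lab'$, hence $\lab'\not\sqsubseteq\lab$; and Lemma~\ref{pcl} shows that a pure final label is either unchanged or again $\sqsubseteq$-above the (high) context, hence invisible to $\lab$. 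These monotonicity facts cut the gluing down to a short, guided case analysis on whether $\Gamma(\sigma_i(x))$ and $\Gamma(\sigma_i'(x))$ are pure or $\star$-ed, in each surviving case verifying one of clauses (2)--(5) of Definition~\ref{def1} for $\sigma_1'(x)$ and $\sigma_2'(x)$. This is precisely the role played by the evolution relation in the two-point development of~\cite{plas10}; here the auxiliary lemmas let us avoid it.
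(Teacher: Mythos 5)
Your proposal is correct and follows essentially the same route as the paper's proof: induction on the evaluation derivation, the expression-evaluation lemma plus a case split over the assn-1/assn-2 combinations for assignment, and, for unobservable guards, the Confinement Lemma followed by a per-variable gluing argument that replaces the missing transitivity of $\sim_\lab$ using Lemma~\ref{sup1}, Lemma~\ref{pcl} and Corollaries~\ref{cor1} and~\ref{cor2} --- precisely the paper's substitute for the evolution relation of the two-point case. Your one deviation is minor but slightly cleaner: for \texttt{while} under a high guard you reassemble the loop body and the residual loop into a single sequential derivation at the elevated $\pc \sqcup \lab$ and apply confinement once (which also transparently covers runs that peel off different numbers of iterations, e.g.\ one run taking while-t and the other while-f), whereas the paper applies confinement separately to the body and the residual loop and performs the gluing step twice.
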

\begin{proof} By induction on $c$.
\end{proof}

%%%%%%%%%%%%%%%%%%%%%%%%%%%%%%%%%%%%%%%%%%%%%%%%%%%%%%%%

%%%%%%%%%%%%%%%%%%%%%%%%%%%%%%%%%%%%%%%%%%%%%%%%%%%%%%%%

\subsection{Incomparability to the Generalization of Section~\ref{sec:existing}}

%\subsection{Permissiveness of the Approach in Section~\ref{sec:gen:pus}}

We have two distinct and sound generalizations of the original
permissive-upgrade for the two-point lattice: The generalization to
pointwise products of two-point lattices or, equivalently, to powerset
lattices as described in Section~\ref{sec:existing}, and the
generalization to arbitrary lattices described earlier in this
section. For brevity, we call these generalizations puP
(Section~\ref{sec:existing}) and puA (Section~\ref{sec:gen:pus}),
respectively (P and A stand for \underline{p}owerset and
\underline{a}rbitrary, respectively). Since both puP and puA apply to
powerset lattices, an obvious question is whether one is more
permissive than the other on such lattices. 
We show here that the
permissiveness of puP and puA on powerset lattices is
\emph{incomparable} --- there are examples on which each
generalization is more permissive than the other. We explain one
example in each direction below. Roughly, incomparability exists
because puP tracks finer taints (it tracks partial leaks for each
principal separately), but puA's rules for overwriting
partially-leaked variables are more permissive.

\begin{figure}
\centering
\begin{tikzpicture}[xscale=1.3]
\node (Lp) at (0,2) {$HH$};
\node (L1) at (-1,1) {$LH$};
\node (L2) at (1,1) {$HL$};
\node (L) at (0,0) {$LL$};
	
\tikzstyle{every path}=[black] 
\path 	(L2) edge (L)
			(L1) edge (L)
			(Lp) edge (L1)
			(Lp) edge (L2);				
\end{tikzpicture}
\caption{A powerset/product lattice}\label{fig:lattice1}
\end{figure}

% \begin{figure}
% \centering
% \begin{tikzpicture}[xscale=1.3]
% \node (H) at (0,3) {$HHH$};
% \node (M) at (-1,2) [fill=gray] {$HHL$};
% \node (M1) at (0,2) {$HLH$};
% \node (M2) at (1,2) {$LHH$};
% \node (Lp) at (0,1) {$HLL$};
% \node (L1) at (-1,1) [fill=gray] {$LHL$};
% \node (L2) at (1,1) {$LLH$};
% \node (L) at (0,0) [fill=gray] {$LLL$};
	
% \tikzstyle{every path}=[black] 
% \path 	(H) edge (M)
% 			(H) edge (M1)
% 			(H) edge (M2)
% 			(M) edge (L1)
% 			(M1) edge (L1)
% 			(M1) edge (Lp)
% 			(M2) edge (Lp)
% 			(M) edge (Lp)
% 			(M1) edge (L2)
% 			(M2) edge (L2)
% 			(L1) edge (L)
% 			(Lp) edge (L)
% 			(L2) edge (L);				
% \end{tikzpicture}
% \caption{Three-point product lattice}\label{fig:lattice1}
% \end{figure}

We use the powerset lattice of Figure~\ref{fig:lattice1} for our
example. This lattice is the pointwise lifting of the order $L
\sqsubset H$ to the set $S = \{L, H\} \times \{L, H\}$. For brevity,
we write this lattice's elements as $LL$, $LH$, etc. When puP is
applied to this lattice, labels are drawn from the set $\{L, H, P\}
\times \{L, H, P\}$. We write these labels concisely as $LP$, $HL$,
etc. For puA, labels are drawn from the set $S \cup S\pl$. We write
these labels $LH$, $LH\pl$, etc. Note that $LH\pl$ parses as
$(LH)\pl$, not $L(H\pl)$ (the latter is not a valid label in puA
applied to this lattice).

\begin{lstlisting}[float,caption=Example where puA is more permissive than puP,label=list1]
if ($y$)@\label{bi1}@
  $z = 2$ @\label{li1}@
$x = y+z$ @\label{li2}@
if ($y$)@\label{bi2}@
  $x = 3$ @\label{li3}@
if ($x$) @\label{li4}@
  $y = 5$
\end{lstlisting}

\paragraph{Example}
% We start with an example program which executes completely under puA,
% but gets stuck under puP (since puA is sound, there is actually no
% information leak in the program). This example is shown in
% Listing~\ref{list1}. Assume that $x$, $y$ and $z$ have initial labels
% $LLL$, $HHL$ and $LHL$, respectively and that $y \mapsto
% \texttt{true}^{HHL}$, so the branches on lines~\ref{bi1} and~\ref{bi2}
% are both taken. The initial values of $x$ and $z$ are irrelevant (but
% their labels are relevant).

% Under puP, $z$ obtains label $HPL$ at line~\ref{li1} by rule
% assn-PUS'. At line~\ref{li2}, $x$ obtains the label $ (HHL) \sqcup
% (HPL) = HPL$. At line~\ref{li3}, the label of $x$ stays $HPL$ by rule
% assn-PUS'. At line~\ref{li4}, the program halts because the branch
% condition $x$'s label contains $P$.

% On the other hand, under puA, the program executes to completion. At
% line~\ref{li1}, $z$ obtains the label $((HHL) \sqcap (LHL))\pl = LHL\pl$
% by rule assn-2. At line~\ref{li2}, $x$ obtains the label $(HHL) \sqcup
% (LHL\pl) = HHL\pl$. At line~\ref{li3}, the label of $x$ changes to
% $HHL$: the $pc$ at this point (equal to the label of $y$) is $HHL$, so
% rule assn-1 applies. Since $HHL$ is pure, the program does not stop at
% line~\ref{li4}.

We start with an example program which executes completely under puA,
but gets stuck under puP (since puA is sound, there is no actual
information leak in the program). This example is shown in
Listing~\ref{list1}. Assume that $x$, $y$ and $z$ have initial labels
$LL$, $HH$ and $LH$, respectively and that $y \mapsto
\texttt{true}^{HH}$, so the branches on lines~\ref{bi1} and~\ref{bi2}
are both taken. The initial values of $x$ and $z$ are irrelevant but
their labels are relevant.

Under puP, $z$ obtains label $PH$ at line~\ref{li1} by rule
assn-PUS'. At line~\ref{li2}, $x$ obtains the label $ (HH) \sqcup (PH)
= PH$. At line~\ref{li3}, the label of $x$ stays $PH$ by rule
assn-PUS'. At line~\ref{li4}, the program halts because the branch
condition $x$'s label contains $P$.

On the other hand, under puA, the program executes to completion. At
line~\ref{li1}, $z$ obtains the label $(((HH) \sqcup (LL)) \sqcap (LH))\pl = LH\pl$
by rule assn-2. At line~\ref{li2}, $x$ obtains the label $(HH) \sqcup
(LH\pl) = HH\pl$. At line~\ref{li3}, the label of $x$ changes to
$HH$: the $pc$ at this point (equal to the label of $y$) is $HH$, so
rule assn-1 applies. Since $HH$ is pure, the program does not stop at
line~\ref{li4}.

Hence, on this example, puA is more permissive than puP. 

\begin{lstlisting}[float,caption=Example where puP is more permissive than puA,label=list3]
if ($y$) @\label{li32}@
  $x = z$ @\label{li33}@
if ($z$) @\label{li34}@
  $x = z$ @\label{li35}@
if ($x$)  @\label{li36}@
  $z = x$
\end{lstlisting}

\paragraph{Example}
Next, consider the program in Listing~\ref{list3}. For this program,
puP is more permissive than puA. Assume that $x$, $y$ and $z$ have
initial labels $LL$, $HL$ and $LH$, respectively and that the initial
store contains $y \mapsto \texttt{true}^{HL}, z \mapsto
\texttt{true}^{LH}$, so the branches on lines~\ref{li32}
and~\ref{li34} are both taken. The initial value in $x$ is irrelevant
but its label is important.

Under puA, $x$ obtains label $(((HL) \sqcup (LH)) \sqcap (LL))\pl = LL\pl$ at
line~\ref{li33} by rule assn-2. At line~\ref{li35}, the same rule
applies but the label of $x$ remains $LL\pl$. When the program tries
to branch on $x$ at line~\ref{li36}, it is stopped.

In contrast, under puP, this program executes to completion. At
line~\ref{li33}, the label of $x$ changes to $PH$ by rule
assn-PUS'. At line~\ref{li35}, the label changes to $LH$ because $pc$
and the label of $z$ are both $LH$. Since this new label has no $P$,
line~\ref{li36} executes without halting.

Hence, for this example, puP is more permissive than puA.

\section{Related Work}

We directly build on, and generalize, the permissive-upgrade check of
Austin and Flanagan~\cite{plas10}. Earlier sections describe the
connection of that work to ours.  In recent work, we implemented the
permissive-upgrade check for JavaScript's bytecode in the WebKit
browser engine~\cite{POST14}. Our formalization in that work is
limited to the two-point lattice, and generalizing that formalization
motivated this paper. In working with JavaScript bytecode, we found
permissive-upgrade indispensable: The source-to-bytecode compiler in
WebKit generates assignments to dead variables under high $pc$, which
halts program execution if the no-sensitive-upgrade check (NSU) is used
instead of permissive-upgrade.

The permissive-upgrade check is just one of many ways of avoiding
implicit flows in dynamic IFC when labels on variables are
flow-sensitive (not fixed upfront).  A pre-cursor to the
permissive-upgrade is the NSU check, first proposed by
Zdancewic~\cite{zdancewic02PhD}. A different way of handling the
problem of implicit flows through flow-sensitive labels is to assign a
(fixed) label to each label; this approach has been examined in recent
work by Buiras \emph{et al.} in the context of a language with a
dedicated monad for tracking information flows~\cite{CSF14}. The
precise connection between that approach and permissive-upgrade
remains unclear, although Buiras \emph{et al.}\ sketch a technique
related to permissive-upgrade in their system, while also noting that
generalizing permissive-upgrade to arbitrary lattices is
non-obvious. Our work confirms the latter and shows how it can be
done.

Birgisson \emph{et al.}~\cite{esorics12} describe a testing-based
approach that adds variable upgrade annotations to avoid halting on
the NSU check in an implementation of dynamic IFC for
JavaScript~\cite{csf12}. Hritcu \emph{et al.}  improve permissiveness
by making IFC errors recoverable in the language
Breeze~\cite{Hritcu:ifc}. This is accomplished by a combination of two
methods: making all labels public (by upgrading them when necessary in
a public $\pc$) and by delaying exceptions. 

%% JSFlow~\cite{bello13PhD} improves permissiveness of dynamic IFC on
%% semi-structured control flow constructs in JavaScript using
%% programmer annotations on branch points in the program.

Finally, IFC with flow-sensitive labels can be enforced statically or
using hybrid techniques that combine static and dynamic
methods~\cite{hunt2006:types,russo10:dyn}. Russo \emph{et
  al.}~\cite{russo10:dyn} show formally that the expressive power of
sound flow-sensitive static analysis and sound flow-sensitive dynamic
monitors is incomparable. Hence, there is merit to investigating
hybrid approaches.

%% NSU~\cite{zdancewic02PhD} and PUS~\cite{plas10} have become fairly
%% standard approaches for providing termination insensitive
%% non-interference in a dynamic setting. While the former has been shown
%% to be quite impermissive especially in the presence of dead
%% code~\cite{POST14}, the latter was proved only for a three-point
%% lattice (L, H and P).

%% Recent work by Buiras et al.~\cite{CSF14} presents a notion of labels
%% on labels, this work provides permissive handling of implict flows in
%% a flow-sensitive setting but again they operate in a two element
%% lattice and also claim the generalization of the permissive upgrade
%% strategy as a non-trivial task.

%% Our recent work~\cite{POST14} also generalizes implicit flows to
%% languages with arbitrary control flow (as opposed to a simple while
%% language presented in this work) without requiring additional code
%% annotations, and thus supports features of realistic programming
%% languages like \texttt{break}, \texttt{continue}, \texttt{return} in the
%% middle, and even interprocedural exception handling.

\section{Conclusion}

Permissive-upgrade is a useful technique for avoiding implicit flows
in dynamic information flow control. However, the technique's initial
development was limited to a two-point lattice and pointwise products
of such lattices. We show, by construction, that permissive-upgrade
can be generalized to arbitrary lattices and that the generalization's
rules and correctness definitions are non-trivial.

\section*{Acknowledgments}
 We thank anonymous reviewers for their excellent feedback on
this paper's presentation. This work was funded in part by the
Deutsche For\-schungs\-ge\-mein\-schaft (DFG) grant “Information Flow
Control for Browser Clients” under the priority program “Reliably
Secure Software Systems” (RS$^3$), and the German Federal Ministry of
Education and Research (BMBF) within the Center for IT-Security,
Privacy and Accountability (CISPA) at Saarland University.

\bibliographystyle{IEEEtranS.bst}
\bibliography{ifc}

%!TEX root = paper.tex
\setcounter{myThm}{0}
\setcounter{myLemma}{0}

\lstset{numbers=none}
\begin{table*}
%\onecolumn
\centering
\begin{tabular} {|c|c|c|c|c|c|c|c|}
\hline
%%%%%%%%%%%%%%%%%%%%%%%%%%%%%%%%%%%%%%%%%%
 & $\ell, \ell$ & $\ell_1\pl, \ell_2 $ & $\ell_1, \ell_2\pl$ &
 $\ell_1\pl, \ell_2\pl $& $\textcolor{red}{\ell_1},
 \textcolor{red}{\ell_2}$ & $\ell_1 \pl,
 \textcolor{red}{\ell_2}$& 
 $\textcolor{red}{\ell_1}, \ell_2 \pl$ \\
\hline
%%%%%%%%%%%%%%%%%%%%%%%%%%%%%%%%%%%%%%%%%%
$\ell, \ell$ & 
-
&
\begin{lstlisting}
if(h) 
 x1 = l
\end{lstlisting} & 
\begin{lstlisting}
if(h) 
 x1 = l
\end{lstlisting} & 
\begin{lstlisting}
if(h) 
 x1 = l
else
 x1 = l
\end{lstlisting} 
& 
\begin{lstlisting}
x1 = h
\end{lstlisting} 
& 
\begin{lstlisting}
x1 = m
if(h) 
 x1 = 4
if(m)
  x1 = l*
\end{lstlisting} 
& 
\begin{lstlisting}
x1 = m
if(h) 
 x1 = 4
if(m)
  x1 = l*
\end{lstlisting} 
\\
\hline
%%%%%%%%%%%%%%%%%%%%%%%%%%%%%%%%%%%%%%%%%%
$\ell_1\pl, \ell_2$ & 
\begin{lstlisting}
x1 = l
\end{lstlisting} & 
- & 
\begin{lstlisting}
x1 = l
if(h)
  x1 = l
\end{lstlisting} & 
\begin{lstlisting}
if(h)
  x1 = l
\end{lstlisting} & 
\begin{lstlisting}
x1 = h
\end{lstlisting} & 
\begin{lstlisting}
x1 = m 
if (h)
  x1 = l
if(m)
  x1 = l*
\end{lstlisting} & 
\begin{lstlisting}
x1 = m 
if (h)
  x1 = l
if(m)
  x1 = l*
\end{lstlisting} \\
\hline
%%%%%%%%%%%%%%%%%%%%%%%%%%%%%%%%%%%%%%%%%%
$\ell_1, \ell_2\pl$ &
\begin{lstlisting}
x1 = l
\end{lstlisting} & 
\begin{lstlisting}
x1 = l
if(h)
  x1 = l
\end{lstlisting} & 
- & 
\begin{lstlisting}
if(h)
  x1 = l
\end{lstlisting} & 
\begin{lstlisting}
x1 = h
\end{lstlisting} & 
\begin{lstlisting}
x1 = m 
if (h)
  x1 = l
if(m)
  x1 = l*
\end{lstlisting} & 
\begin{lstlisting}
x1 = m 
if (h)
  x1 = l
if(m)
  x1 = l*
\end{lstlisting} \\
\hline
%%%%%%%%%%%%%%%%%%%%%%%%%%%%%%%%%%%%%%%%%%
$\ell_1\pl, \ell_2\pl$ & 
\begin{lstlisting}
x1 = l
\end{lstlisting} 
& 
\begin{lstlisting}
x1 = l
if (h)
 x1 = l
\end{lstlisting} 
& 
\begin{lstlisting}
x1 = l
if (h)
 x1 = l
\end{lstlisting} 
& 
-
&
\begin{lstlisting}
 x1 = h
\end{lstlisting} 
& 
\begin{lstlisting}
x1 = m
if (h)
 x1 = l
if (m)
 x1 = l*
\end{lstlisting} 
&
\begin{lstlisting}
x1 = m
if (h)
 x1 = l
if (m)
 x1 = l*
\end{lstlisting}  
\\
\hline
%%%%%%%%%%%%%%%%%%%%%%%%%%%%%%%%%%%%%%%%%%
$\textcolor{red}{\ell_1}, \textcolor{red}{\ell_2}$ & 
\begin{lstlisting}
x1 = l 
\end{lstlisting}  
& 
\begin{lstlisting}
x1 = l 
if (h)
  x1 = l
\end{lstlisting}  
& 
\begin{lstlisting}
x1 = l 
if (h)
  x1 = l
\end{lstlisting}  
& 
\begin{lstlisting}
x1 = l 
if (h)
  x1 = l
else
  x1 = l
\end{lstlisting}  
& 
-
&
\begin{lstlisting}
x1 = m
if (h)
  x1 = l
if(m)
  x1 = l*
\end{lstlisting}   
& 
\begin{lstlisting}
x1 = m
if (h)
 x1 = l
if (m)
 x1 = l*
\end{lstlisting}   
\\
\hline
%%%%%%%%%%%%%%%%%%%%%%%%%%%%%%%%%%%%%%%%%%
$\ell_1 \pl, \textcolor{red}{\ell_2}$ & 
\begin{lstlisting}
x1 = l
\end{lstlisting}
&
\begin{lstlisting}
x1 = l
if (h)
  x1 = l 
\end{lstlisting} 
& 
\begin{lstlisting}
x1 = l
if (h)
  x1 = l 
\end{lstlisting} 
& 
\begin{lstlisting}
x1 = l
if (h)
  x1 = l 
else
  x1 = l
\end{lstlisting} 
& 
\begin{lstlisting}
x1 = h
\end{lstlisting} 
&
- 
&
\begin{lstlisting}
x1 = m
if (h)
 x1 = l
if (m)
 x1 = l*
\end{lstlisting} 
\\
\hline
%%%%%%%%%%%%%%%%%%%%%%%%%%%%%%%%%%%%%%%%%%
$\textcolor{red}{\ell_1}, \ell_2 \pl$ &
\begin{lstlisting}
x1 = l
\end{lstlisting} 
& 
\begin{lstlisting}
x1 = l
if (h)
  x1 = l 
\end{lstlisting} 
& 
\begin{lstlisting}
x1 = l
if (h)
  x1 = l 
\end{lstlisting} 
& 
\begin{lstlisting}
x1 = l
if (h)
  x1 = l 
else
  x1 = l
\end{lstlisting} 
& 
\begin{lstlisting}
x1 = h
\end{lstlisting} 
& 
\begin{lstlisting}
x1 = m
if (h)
  x1 = l
if(m)
  x1 = l* 
\end{lstlisting} 
& 
-
\\
\hline
\end{tabular}
\caption{Examples for all possible transitions of low-equivalent to
  low-equivalent values}\label{tab:flows}
\end{table*}

\appendix
\section{Appendix}
Assumptions: \\
l is a variable with label L \\
m is a variable with label M \\
h is a variable with label H \\
l\pl is a variable with label L\pl \\
L $\sqsubseteq$ M $\sqsubseteq$ H\\
and we assume the attacker at level L. \textcolor{red}{$\ell$}
represents the labels that are above the level of the attacker.

The table shows example programs for the transition from
low-equivalent values to low-equivalent values. First column and first
row of the table represents all the possible ways in which two values
can be low-equivalent (from defintion~\ref{def1}).

% \acks
\subsection{Proofs and Results}
\begin{myLemma}{\emph{Expression Evaluation Lemma}}\\
\label{exp}
If $\sigma_1 \sim_\lab \sigma_2$, \\
$\langle e, \sigma_1 \rangle \Downarrow n_1^{k_1}$, \\
$\langle e, \sigma_2 \rangle \Downarrow n_2^{k_2}$, \\
then
$n_1^{k_1} \sim_\lab n_2^{k_2}$.
\end{myLemma}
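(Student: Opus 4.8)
The plan is to proceed by structural induction on the expression $e$, one case per expression-evaluation rule. The two base cases are immediate. For a constant $n$, rule \emph{const} forces $n_1^{k_1} = n_2^{k_2} = n^{\perp}$; since $\perp \sqsubseteq \lab$ and $n_1 = n_2$, clause (1) of Definition~\ref{def1} applies. For a variable $x$, rule \emph{var} gives $n_1^{k_1} = \sigma_1(x)$ and $n_2^{k_2} = \sigma_2(x)$, so the conclusion is exactly the instance $\sigma_1(x) \sim_\lab \sigma_2(x)$ that $\sigma_1 \sim_\lab \sigma_2$ supplies.

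The only real work is the inductive case $e = e' \odot e''$. Rule \emph{oper} evaluates each subexpression in each store, giving $\langle e', \sigma_i \rangle \Downarrow a_i^{p_i}$ and $\langle e'', \sigma_i \rangle \Downarrow b_i^{q_i}$ with $n_i = a_i \odot b_i$ and $k_i = p_i \sqcup q_i$ for $i \in \{1,2\}$. The induction hypothesis applied to each subexpression yields $a_1^{p_1} \sim_\lab a_2^{p_2}$ and $b_1^{q_1} \sim_\lab b_2^{q_2}$, and I must derive $(a_1 \odot b_1)^{k_1} \sim_\lab (a_2 \odot b_2)^{k_2}$. I would organize the argument around the $\star$-status of $k_1$ and $k_2$, reading off the join definition of Figure~\ref{fig:labels}: $k_i$ is $\star$-free exactly when both $p_i$ and $q_i$ are $\star$-free, and is $\star$-ed as soon as one operand is. When both $k_1, k_2$ are pure, each subexpression equivalence must be via clause (1) or (2), and a short computation using the facts that $\lab_a \sqcup \lab_b \sqsubseteq \lab$ when both joinands are $\sqsubseteq \lab$, and $\lab_a \sqcup \lab_b \not\sqsubseteq \lab$ when some joinand is $\not\sqsubseteq \lab$, places the result in clause (1) or (2). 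When both are $\star$-ed, clause (3) applies with nothing further to check.

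The delicate shapes are the mixed ones, where exactly one of $k_1, k_2$ is $\star$-ed, so the target is clause (4) (and, by swapping the two executions, clause (5)); I expect this to be the main obstacle. Here the side condition ``$\lab_2 \not\sqsubseteq \lab$ or $\lab_1 \sqsubseteq \lab_2$'' must be re-established for $\lab_2 = k_2$ and $\lab_1$ the pure part of $k_1$. Since $\lab_2$ is the join of the two subexpressions' second-execution pure parts, and $\lab_1$ the join of their first-execution pure parts, I would argue as follows. If $\lab_2 \not\sqsubseteq \lab$ the first disjunct holds immediately. Otherwise $\lab_2 \sqsubseteq \lab$ forces each contributing second-execution pure part below $\lab$; this rules out any clause-(2) contribution, and compels every clause-(4) contribution to satisfy its side condition in the form ``first-execution pure part $\sqsubseteq$ second-execution pure part'', while clause-(1) contributions have these two pure parts equal. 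Monotonicity of $\sqcup$ then lifts this pointwise inequality to $\lab_1 \sqsubseteq \lab_2$, giving the second disjunct. This closes the mixed case, and with it the induction.
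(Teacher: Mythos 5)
Your proposal is correct and follows essentially the same route as the paper's proof: structural induction on $e$, with the \emph{const} and \emph{var} cases immediate and the \emph{oper} case discharged by case analysis on which clauses of Definition~\ref{def1} witness the two induction hypotheses. The paper enumerates the clause combinations for the two subexpression pairs directly, whereas you group them by the $\star$-status of $k_1$ and $k_2$ and use monotonicity of $\sqcup$ in the mixed case; this is a cleaner bookkeeping of the same case analysis, and your handling of the clause~(4)/(5) side conditions there is sound.
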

\begin{proof}
Proof by induction on the derivation and case analysis on the last
expression rule.
\begin{enumerate}
\item const: $n_1 = n_2 = n$ and $k_1 = k_2 = \perp$. 

\item var: As $\sigma_1 \sim_\lab \sigma_2$, $\forall x.\sigma_1(x) = n_1^{k_1}
  \sim_\lab \sigma_2(x) = n_2^{k_2}$. 

\item oper: IH1: If $\langle e_1, \sigma_1 \rangle \Downarrow
  n_1'^{k_1'}$, $\langle e_1, \sigma_2 \rangle \Downarrow
  n_2'^{k_2'}$, $\sigma_1 \sim_\lab \sigma_2$, then $n_1'^{k_1'}
  \sim_\lab n_2'^{k_2'}$.\\
IH2: If $\langle e_2, \sigma_1 \rangle \Downarrow
  n_1''^{k_1''}$, $\langle e_2, \sigma_2 \rangle \Downarrow
  n_2''^{k_2''}$, $\sigma_1 \sim_\lab \sigma_2$, then $n_1''^{k_1''}
  \sim_\lab n_2''^{k_2''}$.\\
T.S. $n_1^{k_1} \sim_\lab n_2^{k_2}$, where $n_1 = n_1' \odot n_1''$, $n_2 = n_2' \odot n_2''$ 
and $k_1 = k_1' \sqcup k_1''$, $k_2 = k_2' \sqcup k_2''$.\\
As $\sigma_1 \sim_\lab \sigma_2$, from IH1 and IH2, $n_1'^{k_1'}
  \sim_\lab n_2'^{k_2'}$ and $n_1''^{k_1''}  \sim_\lab n_2''^{k_2''}$.\\
Proof by case analysis on low-equivalence definition for $n_1'^{k_1'} 
  \sim_\lab n_2'^{k_2'}$ followed by case analysis on low-equivalence definition for $n_1''^{k_1''}
  \sim_\lab n_2''^{k_2''}$.
\end{enumerate}
\end{proof}

\begin{myLemma}{\emph{$\star$-preservation Lemma}}\\
\label{sup1}
$\forall x$.If $\langle c, \sigma \rangle \Downarrow_\pc
\sigma'$, $\Gamma(\sigma(x)) =   \lab \pl
 $ and $\pc \not\sqsubseteq \lab$, then $\Gamma(\sigma'(x)) =
  \lab' \pl   \wedge \lab' \sqsubseteq \lab$
\end{myLemma}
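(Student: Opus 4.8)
The plan is to prove the statement by induction on the derivation of $\langle c, \sigma \rangle \Downarrow_\pc \sigma'$, with a case analysis on the last rule applied. Throughout, the essential tool I will exploit is a monotonicity observation about the side condition $\pc \not\sqsubseteq \lab$: first, if $\pc \not\sqsubseteq \lab$ and $\lab'' \sqsubseteq \lab$, then $\pc \not\sqsubseteq \lab''$ (otherwise transitivity would give $\pc \sqsubseteq \lab$); and second, if $\pc \not\sqsubseteq \lab$ then $\pc \sqcup \lab_e \not\sqsubseteq \lab$ for any guard label $\lab_e$, since $\pc \sqsubseteq \pc \sqcup \lab_e$. These two facts are what let the induction hypothesis fire in the compound cases, where either the tracked lower bound shrinks or the $\pc$ is raised.

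The base case is assignment, $c = (y := \expr)$. If $y \neq x$, the store is unchanged at $x$, so $\Gamma(\sigma'(x)) = \lab\pl$ and I take $\lab' = \lab$. If $y = x$, then the current label $l = \Gamma(\sigma(x)) = \lab\pl$ is $\star$-ed, so the premise $l = \lab_x \vee l = \lab_x\pl$ common to both assignment rules forces $\lab_x = \lab$. Because the hypothesis gives $\pc \not\sqsubseteq \lab = \lab_x$, rule assn-1 (which requires $\pc \sqsubseteq \lab_x$) cannot have been used; hence assn-2 applies and assigns the label $k = ((\pc \sqcup m) \sqcap \lab_x)\pl = ((\pc \sqcup m) \sqcap \lab)\pl$. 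This is $\star$-ed with pure part $\lab' = (\pc \sqcup m) \sqcap \lab \sqsubseteq \lab$, as required. This is the only case in which $x$'s label genuinely changes.

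For the compound cases I chain the induction hypothesis using the monotonicity facts above. In the \emph{seq} case, the first sub-derivation runs under the same $\pc$ and yields $\Gamma(\sigma''(x)) = \lab''\pl$ with $\lab'' \sqsubseteq \lab$; since $\pc \not\sqsubseteq \lab''$ still holds, applying the hypothesis to the second sub-derivation gives $\Gamma(\sigma'(x)) = \lab'\pl$ with $\lab' \sqsubseteq \lab'' \sqsubseteq \lab$. In the \texttt{if-then-else} cases the taken branch is executed under the raised context $\pc \sqcup \lab_e$, which still satisfies $\pc \sqcup \lab_e \not\sqsubseteq \lab$, so the induction hypothesis applies directly. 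The \texttt{while-f} case leaves the store unchanged, so $\lab' = \lab$ works. The \texttt{while-t} case combines both ideas: the loop body runs under $\pc \sqcup \lab_e \not\sqsubseteq \lab$ giving an intermediate bound $\lab'' \sqsubseteq \lab$, and since $\pc \sqcup \lab_e \not\sqsubseteq \lab''$ by the first monotonicity fact, the recursive unrolling yields $\lab' \sqsubseteq \lab'' \sqsubseteq \lab$.

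The one genuinely load-bearing step, and the main obstacle, is verifying that the side condition $\pc \not\sqsubseteq (\cdot)$ survives into every sub-derivation to which I appeal. Concretely this amounts to checking the two monotonicity facts and applying the right one in each of the two flavors of compound step: a shrinking lower bound in sequencing and loop re-entry, versus a raised $\pc$ in branches and loop bodies. Once this is in place, the remainder is routine bookkeeping, since the tracked variable $x$ can only have its label altered by an assignment to $x$, which is precisely the base case handled above.
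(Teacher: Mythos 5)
Your proof is correct and follows essentially the same route as the paper's: induction on the evaluation derivation, ruling out assn-1 for the tracked variable (since it requires $\pc \sqsubseteq \lab_x$), reading off the bound $((\pc \sqcup m) \sqcap \lab_x)\pl$ from assn-2, and threading the induction hypothesis through seq, if-else, and while via the raised-$\pc$ and shrinking-bound observations. If anything, you are more explicit than the paper, which leaves implicit the monotonicity fact that $\pc \not\sqsubseteq \lab$ and $\lab'' \sqsubseteq \lab$ imply $\pc \not\sqsubseteq \lab''$, needed to re-fire the hypothesis in the seq and while-t cases.
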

\begin{proof}
Proof by induction on the derivation and case analysis on the last
rule.

\begin{enumerate}
\item skip : $\sigma = \sigma'$.

\item assn-1: As $\pc \not\sqsubseteq \lab$, these cases do not apply.

\item assn-2: From the premises, for $x$ in statement $c$,
  $\Gamma(\sigma'(x)) =   ((\pc \sqcup m)
  \sqcap \lab) \pl  =  \lab'$. Thus, $\lab' \sqsubseteq \lab$.\\
  For any other $y$, $\sigma(y) = \sigma'(y)$. Thus, $\lab' = \lab$.

\item seq : IH1 : $\forall x$.If $ \langle c, \sigma \rangle \Downarrow_\pc
\sigma''$, $\Gamma(\sigma(x)) =   \lab\pl
 $ and $\pc \not\sqsubseteq \lab$, then $\Gamma(\sigma''(x)) =
  \lab'' \pl   \wedge \lab'' \sqsubseteq \lab$\\
IH2 : $\forall x$.If $ \langle c, \sigma'' \rangle \Downarrow_\pc
\sigma'$, $\Gamma(\sigma''(x)) =   \lab'' \pl
 $ and $\pc \not\sqsubseteq \lab''$, then $\Gamma(\sigma'(x)) =
  \lab' \pl   \wedge \lab' \sqsubseteq \lab''$\\
Thus, from IH1 and IH2, $\Gamma(\sigma'(x)) =
  \lab' \pl   \wedge \lab' \sqsubseteq \lab$.

\item if-else: Let $k = \lab''$. \\
 IH: $\forall x$.If $ \langle c, \sigma \rangle \Downarrow_{\pc \sqcup \lab''}
\sigma'$, $\Gamma(\sigma(x)) =   \lab \pl
 $ and $\pc \sqcup \lab'' \not\sqsubseteq \lab$, then $\Gamma(\sigma'(x)) =
  \lab' \pl   \wedge \lab' \sqsubseteq \lab$\\
 As $\pc \not\sqsubseteq \lab$, so $\pc \sqcup \lab'' \not\sqsubseteq
 \lab$. \\ Thus from IH, $\Gamma(\sigma'(x)) =
  \lab' \pl   \wedge \lab' \sqsubseteq \lab$

\item while-t: Let $k = \lab_e$. \\
 IH1: $\forall x$.If $\langle c, \sigma \rangle \Downarrow_{\pc \sqcup \lab_e}
\sigma''$, $\Gamma(\sigma(x)) =   \lab \pl
 $ and $\pc \sqcup \lab_e \not\sqsubseteq \lab$, then $\Gamma(\sigma''(x)) =
  \lab''\pl   \wedge \lab'' \sqsubseteq \lab$\\
IH2: $\forall x$.If $\langle c, \sigma'' \rangle \Downarrow_{\pc \sqcup \lab_e}
\sigma'$, $\Gamma(\sigma''(x)) =   \lab'' \pl
 $ and $\pc \sqcup \lab_e \not\sqsubseteq \lab$, then $\Gamma(\sigma'(x)) =
  \lab' \pl   \wedge \lab' \sqsubseteq \lab$\\
 As $\pc \not\sqsubseteq \lab$, so $\pc \sqcup \lab_e \not\sqsubseteq
 \lab$. \\ Thus from IH1 and IH2, $\Gamma(\sigma'(x)) =
  \lab' \pl   \wedge \lab' \sqsubseteq \lab$

\item while-f : $\sigma = \sigma'$.
\end{enumerate}
\end{proof}

\begin{myLemma}{\emph{$\pc$ Lemma}}\\
If $\langle c, \sigma \rangle \Downarrow_\pc \sigma'$, then $\forall
x.\Gamma(\sigma'(x)) = \lab \implies (\sigma(x) = \sigma'(x)) \vee \pc
\sqsubseteq \lab$.
\end{myLemma}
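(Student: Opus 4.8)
The plan is to prove the lemma by induction on the derivation of $\langle c, \sigma \rangle \Downarrow_\pc \sigma'$, proceeding by case analysis on the last rule applied. The one fact I would keep in view throughout is the sharp dichotomy between the two assignment rules of Figure~\ref{fig:assn-our}: rule assn-1 always produces a \emph{pure} label that dominates $\pc$, whereas rule assn-2 always produces a $\star$-ed label. Since the hypothesis $\Gamma(\sigma'(x)) = \lab$ fixes the final label of $x$ to be pure, rule assn-2 can never be responsible for a genuine change to $x$: a change by assn-2 would leave $x$ with a $\star$-ed label, contradicting the hypothesis. This observation is what lets the pure-label assumption discharge the hardest-looking assignment sub-case almost for free.

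For the assignment cases, suppose the command is $y := \expr$ with $\langle \expr, \sigma\rangle \Downarrow n^m$. If $x \neq y$, then $\sigma(x) = \sigma'(x)$ and the left disjunct holds immediately. If $x = y$ and rule assn-1 fired, the new label is $k = \pc \sqcup m$, so $\pc \sqsubseteq k = \lab$ and the right disjunct holds. If $x = y$ and rule assn-2 fired, the new label is $((\pc \sqcup m) \sqcap \lab_x)\pl$, which is $\star$-ed; this contradicts $\Gamma(\sigma'(x)) = \lab$ being pure, so this sub-case cannot arise. The while-f case is immediate since there $\sigma = \sigma'$.

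The compound cases --- seq, the two if-else rules, and while-t --- all reduce to chaining the induction hypothesis across the intermediate store(s), using the monotonicity fact $\pc \sqsubseteq \pc \sqcup \lab_e$ for the guard label $\lab_e$. For seq I would apply the IH to the second sub-derivation at $\sigma''$: either $\pc \sqsubseteq \lab$ (done), or $\sigma''(x) = \sigma'(x)$, whence $\Gamma(\sigma''(x)) = \lab$ and a second appeal to the IH on the first sub-derivation closes the case. For if-else the body runs under $\pc \sqcup \lab_e$, so the IH yields $\sigma(x) = \sigma'(x)$ or $\pc \sqcup \lab_e \sqsubseteq \lab$, and the latter already gives $\pc \sqsubseteq \lab$ by monotonicity.

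The case that demands a little care --- and the one I would flag as the main obstacle --- is while-t, since it combines the store-threading of seq with the pc-elevation of if-else: the body and the residual loop both execute under $\pc \sqcup \lab_e$ through the intermediate store $\sigma''$. I would treat it exactly as seq, first applying the IH to the residual-loop sub-derivation at $\sigma''$ (with pc $=\pc \sqcup \lab_e$) and then, in the surviving branch where $\sigma''(x) = \sigma'(x)$, to the body sub-derivation; in both appeals the conclusion $\pc \sqcup \lab_e \sqsubseteq \lab$ yields $\pc \sqsubseteq \lab$, so the elevated pc never weakens the result. No step here is technically deep; the single genuine insight is the pure-versus-$\star$-ed split of the assignment rules, which is precisely what rules out assn-2 as a source of change for a pure-labelled variable.
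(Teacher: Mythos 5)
Your proof is correct and follows essentially the same route as the paper's: induction on the derivation with the identical case analysis, the same order of IH application in the seq and while-t cases (second sub-derivation first, then the first in the surviving branch), and the same monotonicity step $\pc \sqsubseteq \pc \sqcup \lab_e$ for the elevated-$\pc$ cases. Your explicit remark that assn-2 cannot be the source of the change --- because it produces a $\star$-ed label, contradicting the purity of $\Gamma(\sigma'(x)) = \lab$ --- is exactly the point the paper's assignment case leaves implicit, so if anything your write-up is slightly more careful there.
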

\begin{proof}
Proof by induction on the derivation and case analyis on the last
rule.
\begin{itemize}
  \item skip: $\sigma(x) = \sigma'(x)$. 
  \item assn: For $x$ in the statement $c$, by premises, $\lab = \pc
    \sqcup \lab_e$. Thus, $\pc \sqsubseteq \lab$.\\
    For any other $y$ s.t. $\Gamma(\sigma'(y)) = \lab'$, $\sigma(y) =
    \sigma'(y)$.
  \item seq: IH1: If $\langle c, \sigma \rangle \Downarrow_\pc \sigma''$, then $\forall
x.\Gamma(\sigma''(x)) = \lab'' \implies (\sigma(x) = \sigma''(x)) \vee \pc
\sqsubseteq \lab''$.\\
IH2: If $\langle c, \sigma'' \rangle \Downarrow_\pc \sigma'$, then $\forall
x.\Gamma(\sigma'(x)) = \lab \implies (\sigma''(x) = \sigma'(x)) \vee \pc
\sqsubseteq \lab$.\\
  From IH2, if $\sigma''(x) \neq \sigma'(x)$, then $\pc \sqsubseteq
  \lab$.\\
  If $\sigma''(x) = \sigma'(x)$, then from IH1:
  \begin{itemize}
   \item If $\sigma(x) = \sigma''(x)$: $\sigma(x) = \sigma'(x)$.
   \item If $\sigma(x) \neq \sigma''(x)$: $\pc \sqsubseteq \lab''$,
     where $\lab'' = \Gamma(\sigma''(x))$. As $\sigma''(x) =
     \sigma'(x)$, $\lab'' = \Gamma(\sigma'(x)) = \lab$. Thus, $\pc
     \sqsubseteq \lab$.
   \end{itemize}

 \item if-else: IH: If $\langle c, \sigma \rangle \Downarrow_{\pc \sqcup \lab_e} \sigma'$, then $\forall
x.\Gamma(\sigma'(x)) = \lab \implies (\sigma(x) = \sigma'(x)) \vee \pc
\sqcup \lab_e \sqsubseteq \lab$.\\
From IH, either $(\sigma(x) = \sigma'(x))$ or $\pc \sqcup \lab_e
\sqsubseteq \lab$. Thus, $(\sigma(x) = \sigma'(x)) \vee \pc \sqsubseteq \lab$.
\item while-t: IH1: If $\langle c, \sigma \rangle \Downarrow_{\pc
    \sqcup \lab_e} \sigma''$, then $\forall
x.\Gamma(\sigma''(x)) = \lab'' \implies (\sigma(x) = \sigma''(x)) \vee \pc
\sqsubseteq \lab''$.\\
IH2: If $\langle c, \sigma'' \rangle \Downarrow_{\pc \sqcup \lab_e} \sigma'$, then $\forall
x.\Gamma(\sigma'(x)) = \lab \implies (\sigma''(x) = \sigma'(x)) \vee \pc
\sqsubseteq \lab$.\\
From similar reasoning as in ``seq'', we have either $\sigma(x) =
\sigma'(x)$ or $\pc \sqcup \lab_e \sqsubseteq \lab$. Thus, $\sigma(x) =
\sigma'(x) \vee \pc \sqsubseteq \lab$. 
\item while-f:  $\sigma(x) = \sigma'(x)$. 
\end{itemize}
\end{proof}

\begin{myLemma}{\emph{Conf{i}nement Lemma}}
\label{conf}
If $\pc \not\sqsubseteq \lab$, $\langle c, \sigma \rangle \Downarrow_\pc
\sigma'$, then $\sigma \sim_\lab \sigma'$.
\end{myLemma}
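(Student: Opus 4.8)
The plan is to prove the statement by induction on the derivation of $\langle c, \sigma \rangle \Downarrow_\pc \sigma'$, maintaining the hypothesis $\pc \not\sqsubseteq \lab$ throughout. Two observations make several cases immediate. First, $\sim_\lab$ is reflexive (for a value against itself, whichever of clauses (1)--(3) of Definition~\ref{def1} matches its label holds), so the cases \emph{skip} and \emph{while-f}, where $\sigma = \sigma'$, are trivial, and in every other rule all variables \emph{not} assigned by the command are dispatched by reflexivity. Second, whenever a sub-command runs under a raised counter $\pc \sqcup \lab_e$ (the taken branch of an \emph{if-else}, or the body/continuation of a \emph{while-t}), we still have $\pc \sqcup \lab_e \not\sqsubseteq \lab$ because $\pc \sqsubseteq \pc \sqcup \lab_e$; hence the induction hypothesis continues to apply to sub-derivations, and in particular the \emph{if-else} cases reduce to a single sub-derivation and follow directly from the IH.

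The computational heart is the assignment case, which I would settle by case analysis on the two rules and on whether the old label $l$ of the assigned variable $x$ is pure ($l = \lab_x$) or starred ($l = \lab_x\pl$). In rule \emph{assn-1} we have $\pc \sqsubseteq \lab_x$, so the new label $\pc \sqcup m$ is pure and, since $\pc \not\sqsubseteq \lab$, satisfies $\pc \sqcup m \not\sqsubseteq \lab$; if $l$ was pure then also $\lab_x \not\sqsubseteq \lab$ (else $\pc \sqsubseteq \lab_x \sqsubseteq \lab$), landing in clause (2), while if $l = \lab_x\pl$ we land in clause (4) with $\lab_2 = \pc \sqcup m \not\sqsubseteq \lab$. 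In rule \emph{assn-2} we have $\pc \not\sqsubseteq \lab_x$ and the new label is $((\pc \sqcup m) \sqcap \lab_x)\pl$, whose pure part is $\sqsubseteq \lab_x$; if $l$ was pure this gives clause (5) via $\lab_2 \sqsubseteq \lab_x = \lab_1$, and if $l = \lab_x\pl$ it gives clause (3). Thus the assigned variable is $\lab$-equivalent before and after in all four combinations.

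The only remaining cases, \emph{seq} and \emph{while-t}, compose two sub-derivations $\langle c_1, \sigma\rangle \Downarrow_\pc \sigma''$ and $\langle c_2, \sigma''\rangle \Downarrow_\pc \sigma'$, and here lies the main obstacle: the IH yields $\sigma \sim_\lab \sigma''$ and $\sigma'' \sim_\lab \sigma'$, but $\sim_\lab$ is not transitive, so these cannot simply be chained (for instance, a low pure value, a starred value with small pure part, and a high pure value can be pairwise equivalent via clauses (5) and (4) without the endpoints being equivalent). I would therefore strengthen the induction to a \emph{composable} invariant on label-pairs: a relation that refines $\sim_\lab$, is closed under composition, and records how a variable's label may legitimately evolve under a high $\pc$. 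Its content is exactly the per-variable transition analysis underlying the assignment case---a value pure and below $\lab$ can change only to a starred value whose pure part is still below $\lab$, and never back to a pure value; a value pure and above $\lab$ stays pure-above or becomes starred; and a starred value stays starred, with non-increasing pure part when that part is below $\lab$, or may turn pure-above otherwise. The auxiliary lemmas supply these monotonicity facts: Lemma~\ref{sup1} for the descent of starred pure parts, and the $\pc$-lemma (Lemma~\ref{pcl}) with Corollaries~\ref{cor1} and~\ref{cor2} for the constraint that any variable ending with a \emph{changed} pure label must satisfy $\pc \sqsubseteq \Gamma(\sigma'(x)) \not\sqsubseteq \lab$. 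The crux is choosing this invariant so that it is simultaneously strong enough to entail $\sim_\lab$ and weak enough to be transitive; once that is done, \emph{seq} and \emph{while-t} follow by composing the invariant along the two sub-derivations and reading off $\sigma \sim_\lab \sigma'$. This is precisely the subtlety the authors flag when they note that $\sim_\lab$ fails to be transitive.
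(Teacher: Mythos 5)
Your skeleton matches the paper's proof up to the hard case: induction on the derivation, \emph{skip}/\emph{while-f} by reflexivity of $\sim_\lab$, \emph{if-else} dispatched by noting $\pc \sqcup \lab_e \not\sqsubseteq \lab$ and invoking the IH, and an assignment case split on assn-1/assn-2 and pure/starred old labels. Where you genuinely diverge is \emph{seq} and \emph{while-t}. The paper does \emph{not} strengthen the induction hypothesis: it keeps the non-transitive $\sim_\lab$, obtains $\sigma \sim_\lab \sigma''$ and $\sigma'' \sim_\lab \sigma'$ from the two IHs, and then closes the composition by an exhaustive case analysis over the clause-pairs of Definition~\ref{def1}, using Corollary~\ref{cor1} and Lemma~\ref{sup1} \emph{applied to the second sub-derivation} to repair or kill the problematic combinations (e.g., a value that was pure below $\lab$, became starred, and then turned pure again is impossible, because Corollary~\ref{cor1} would force $\pc \sqsubseteq \lab_2 \sqsubseteq \lab$). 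Your ``composable invariant recording how labels may evolve under high $\pc$'' is, in contrast, precisely the \emph{evolution} relation of Austin and Flanagan~\cite{plas10} --- the device this paper explicitly says it avoids in favor of the conventional confinement-style argument. Your route is viable and arguably cleaner (composition in \emph{seq}/\emph{while-t} becomes free, and all the work is concentrated in defining the transition relation and showing its reflexive-transitive closure refines $\sim_\lab$), but it reproves the lemma along the lines of~\cite{plas10} rather than along this paper's; also note that what you defer as ``the crux'' is exactly the part that constitutes the bulk of the paper's own case analysis, so the two proofs are comparable in total effort.

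Two slips would need fixing to make your version airtight. First, in assn-1 the new label $\pc \sqcup m$ need \emph{not} be pure: if $m = \lab_e\pl$, then by Figure~\ref{fig:labels} the new label is $(\pc \sqcup \lab_e)\pl$. Your clause-(2)/clause-(4) arguments must be supplemented by clause (5) (old pure, $\lab_x \not\sqsubseteq \lab$) and clause (3) (old starred) for this sub-case; the paper's proof handles exactly this by allowing $k_f = \lab_f$ or $k_f = \lab_f\pl$ with $\lab_f = \pc\sqcup\lab_e \not\sqsubseteq \lab$. Second, your summary of the invariant --- a value pure and below $\lab$ moves only to a starred value ``whose pure part is still below $\lab$'' --- is too weak to entail $\sim_\lab$: clause (5) of Definition~\ref{def1} demands the final pure part be below the variable's \emph{original} label $\lab_1$, and a starred value with pure part $\sqsubseteq \lab$ but $\not\sqsubseteq \lab_1$ is not $\lab$-equivalent to $\lab_1$. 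The stronger fact is what actually holds and what your own assn-2 analysis proves (new pure part $\sqsubseteq \lab_x$, then descending by Lemma~\ref{sup1}), so the invariant must record descent relative to the initial pure label, not relative to $\lab$; with that correction the closure does compose and does refine $\sim_\lab$.
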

\begin{proof}
Proof by induction on the derivation and case analysis on the last rule.
\begin{enumerate}
\item skip : $\sigma = \sigma'$.

\item assn: Let $x_i = v_i^{k_i}$ and $x_f =
  v_f^{k_f}$, s.t $k_i = \lab_i \vee k_i =   \lab_i \pl
   $. 
  \begin{itemize}
    \item  $\pc \sqsubseteq \lab_i$ : As $\pc \not\sqsubseteq \lab$, $\lab_i
      \not\sqsubseteq \lab$. By premises of
      assn, $k_f = \lab_f \vee k_f =  \lab_f\pl  $,
      where $\lab_f = \pc \sqcup \lab_e$. As $\pc \not\sqsubseteq \lab$,
      $\lab_f \not\sqsubseteq \lab$. 
      Thus, by definition \ref{def1}.2, \ref{def1}.3, \ref{def1}.4 or
      \ref{def1}.5, $x_i \sim_\lab x_f$. 

\item $\pc \not\sqsubseteq \lab_i$ :
  By premise, $k_f =   ((\pc \sqcup m) \sqcap \lab_i) \pl  $. Thus,
  $\lab_f \sqsubseteq \lab_i$ and by definition
  \ref{def1}.3 or~\ref{def1}.5 $x_i \sim_\lab x_f$.
\end{itemize}

\item seq : IH1: $\sigma \sim_\lab \sigma''$ and IH2: $\sigma''
  \sim_\lab \sigma'$. 
 T.S : $\sigma \sim_\lab \sigma'$.\\
  For all $x \in dom(\sigma)$,
 respective $x''  \in dom(\sigma'')$ and
 respective $x' \in dom(\sigma')$, $x \sim_\lab x''$ and $x''
 \sim_\lab x'$. \\
 To show: $x \sim_\lab x'$. \\
 Let $x = v_1^{k_1}, x'' = v_2^{k_2}, x' = v_3^{k_3}$, where $k_1 =
 \lab_1 \vee k_1 =   \lab_1\pl  $, $k_2 =
 \lab_2 \vee k_2 =   \lab_2\pl  $ and $k_3 =
 \lab_3 \vee k_3 =   \lab_3\pl  $.\\
Case-analysis on definition~\ref{def1} for IH1.
 \begin{itemize}

   \item $(k_1 = k_2) = \lab' \sqsubseteq \lab \wedge v_1 = v_2$ : By IH2
     and definition~\ref{def1}, we have
     \begin{enumerate}
       \item $(k_2 = k_3) = \lab' \sqsubseteq \lab \wedge v_2 = v_3$
         (case 1): Transitivity of equality, $(k_1 =
         k_3) = \lab' \sqsubseteq \lab \wedge v_1 = v_3$. Thus, $x
         \sim_\lab x'$.
        \item $k_2 = \lab'$ and $k_3  =   \lab_3 \pl   \wedge
          \lab_3 \sqsubseteq \lab' \sqsubseteq \lab$ (case 5): By definition~\ref{def1}.5 $x
          \sim_\lab x'$.
     \end{enumerate}

     \item $k_1 = \lab_1 \not\sqsubseteq \lab \wedge k_2 =
      \lab_2 \not\sqsubseteq \lab $: By IH2, either
      \begin{enumerate}
        \item $k_2 = \lab_2 \not\sqsubseteq \lab \wedge k_3 =
      \lab_3 \not\sqsubseteq \lab $. By definition~\ref{def1}.2, $x
         \sim_\lab x'$.
         \item $k_2 = \lab_2 \not\sqsubseteq \lab \wedge k_3 =
             \lab_3\pl  $: $\lab_1 \not\sqsubseteq
           \lab$. Thus, by definition~\ref{def1}.5, $x \sim_\lab x'$.
       \end{enumerate}

    \item $k_1 =   \lab_1  \pl   \wedge k_2 =
        \lab_2  \pl  $: By IH2, we have
      \begin{enumerate}
       \item $k_2 =   \lab_2  \pl   \wedge k_3 =
        \lab_3  \pl  $ (case 3): By
      definition~\ref{def1}.3, $x \sim_\lab x'$. 
      \item $k_2 =   \lab_2  \pl   \wedge k_3 =
      \lab_3 \wedge (\lab_3 \not\sqsubseteq \lab)$ (case 4):
      By definition~\ref{def1}.4, $x \sim_\lab x'$.
      \item $k_2 =   \lab_2  \pl   \wedge k_3 =
      \lab_3 \wedge (\lab_2 \sqsubseteq \lab_3)$ (case 4):  
      By corollary~\ref{cor1}, $\pc \sqsubseteq \lab_2$. As $\pc \not\sqsubseteq
      \lab$ and $\lab_2 \sqsubseteq \lab_3$, so $\lab_3
      \not\sqsubseteq \lab$. By definition~\ref{def1}.4, $x \sim_\lab
      x'$.
     .
     \end{enumerate}

   \item $k_1 =   \lab_1  \pl   \wedge k_2 =
      \lab_2~s.t.~(\lab_2 \not\sqsubseteq \lab)$ (case 4): Either
      \begin{itemize}
        \item $k_2 = \lab_2 \not\sqsubseteq \lab \wedge k_3 =
      \lab_3 \not\sqsubseteq \lab$: By definition~\ref{def1}.4, $x \sim_\lab
      x'$.
    \item $k_2 = \lab_2 \not\sqsubseteq \lab \wedge k_3 =
       \lab_3  \pl $: By definition~\ref{def1}.3, $x \sim_\lab
      x'$.
        \end{itemize}
    \item $k_1 =   \lab_1  \pl   \wedge k_2 =
      \lab_2~s.t.~ (\lab_1 \sqsubseteq \lab_2)$ (case 4): 
      \begin{itemize}
        \item $k_2 = k_3 = \lab_2$: By definition~\ref{def1}.4, $x \sim_\lab
      x'$.
    \item $k_2 = \lab_2 \not\sqsubseteq \lab \wedge k_3 =
      \lab_3 \not\sqsubseteq \lab$: By definition~\ref{def1}.4, $x \sim_\lab
      x'$.
    \item $k_2 = \lab_2 \not\sqsubseteq \lab \wedge k_3 =
        \lab_3  \pl $: By definition~\ref{def1}.3, $x \sim_\lab
      x'$.
        \end{itemize}

   \item $k_1 = \lab_1  \wedge k_2 =  
      \lab_2 \pl  ~s.t.~ (\lab_1 \not\sqsubseteq \lab)$: By
      IH2, we have 
      \begin{enumerate}
        \item $k_2 =   \lab_2  \pl   \wedge k_3 =
        \lab_3  \pl  $ (case 3): By definition~\ref{def1}.5, $x
         \sim_\lab x'$.
        \item  $k_2 =   \lab_2  \pl   \wedge k_3 =
      \lab_3~s.t.~(\lab_3 \not\sqsubseteq \lab)$ (case 4): By
      definition~\ref{def1}.2, $x \sim_\lab x'$.
      \item  $k_2 =   \lab_2  \pl   \wedge k_3 =
      \lab_3~s.t.~(\lab_2 \sqsubseteq \lab_3)$ (case 4): By corollary~\ref{cor1},
      $\pc \sqsubseteq \lab_2$. As $\pc \not\sqsubseteq 
      \lab$ and $\lab_2 \sqsubseteq \lab_3$, so $\lab_3
      \not\sqsubseteq \lab$. By definition~\ref{def1}.2, $x \sim_\lab
      x'$.
     \end{enumerate}

      \item $k_1 = \lab_1  \wedge k_2 =  
      \lab_2 \pl  ~s.t.~ (\lab_2 \sqsubseteq \lab_1)$: Also,
      $(\lab_2 \sqsubseteq \lab_1 \sqsubseteq \lab)$. By
      IH2, we have 
      \begin{enumerate}
        \item $k_2 =   \lab_2  \pl   \wedge k_3 =
        \lab_3  \pl  $ (case 3): As $\lab_2 \sqsubseteq
      \lab$ and $\pc \not\sqsubseteq \lab$, $\pc \not\sqsubseteq
      \lab_2$.  By lemma~\ref{sup1}, $\lab_3 \sqsubseteq \lab_2 $. Thus,
      $\lab_3\sqsubseteq \lab_2 \sqsubseteq \lab_1$. By definition~\ref{def1}.5, $x
         \sim_\lab x'$.
        \item  $k_2 =   \lab_2  \pl   \wedge k_3 =
      \lab_3$ (case 4): As $\lab_2 \sqsubseteq
      \lab$ and $\pc \not\sqsubseteq \lab$, $\pc \not\sqsubseteq
      \lab_2$. But, by corollary~\ref{cor1}, $\pc \sqsubseteq \lab_2$. By
      contradiction, this case      does not hold.
     \end{enumerate}
 \end{itemize}
 
\item if-else : IH : $k = \lab'$. If $(pc \sqcup \lab')
   \not\sqsubseteq \lab $, then $\sigma \sim_\lab \sigma'$. \\
  As $pc \not\sqsubseteq \lab$, $pc \sqcup \lab' \not\sqsubseteq 
  \lab$. Thus, by IH, $\sigma \sim_\lab \sigma'$. 
 
 \item while-t: IH1 : $k = \lab'$. If $(pc \sqcup \lab')
   \not\sqsubseteq \lab $, then $\sigma \sim_\lab \sigma'$. \\
   As $pc \not\sqsubseteq \lab$, $pc \sqcup \lab' \not\sqsubseteq
  \lab$. Thus, by IH1, $\sigma \sim_\lab \sigma''$. \\
   IH2 : $k = \lab'$. If $(pc \sqcup \lab')
   \not\sqsubseteq \lab $, then $\sigma' \sim_\lab \sigma''$. \\
As $pc \not\sqsubseteq \lab$, $pc \sqcup \lab' \not\sqsubseteq
  \lab$. Thus, by IH, $\sigma'' \sim_\lab \sigma'$. \\
  Therefore, $\sigma \sim_\lab \sigma''$ and $\sigma'' \sim_\lab
  \sigma'$. \\ 
 (Reasoning similar to seq.)
 
\item while-f : $\sigma = \sigma'$
\end{enumerate}
\end{proof}

\begin{myThm}{\emph{Termination-insensitive non-interference}}\\
\label{ni}
If
$~\sigma_1 \sim_\lab \sigma_2$, \\
$\langle c, \sigma_1  \rangle \Downarrow_\pc \sigma_1' $,\\
$\langle c, \sigma_2   \rangle\Downarrow_\pc \sigma_2' $, \\
then \\
$\sigma_1' \sim_\lab \sigma_2'$.
\end{myThm}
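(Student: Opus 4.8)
The plan is to induct on the derivation of $\langle \comm, \sigma_1 \rangle \Downarrow_\pc \sigma_1'$, doing a case analysis on the last rule applied and, within each case, a sub-analysis comparing it against the rule used in the second run. (Inducting on the derivation rather than literally on $\comm$ is what lets the recursive loop premise of \texttt{while-t} be treated by the induction hypothesis.) The organizing principle is that the two runs proceed \emph{in lockstep} precisely when every guard they branch on is visible to the $\lab$-adversary; when a guard is invisible the runs may diverge, and there I appeal to the Confinement Lemma rather than to the induction hypothesis. The Expression-Evaluation Lemma is used throughout to transport equivalence from stores to the values of evaluated subexpressions.

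For the assignment case $x := \expr$, only $x$ changes, so it suffices to show $\sigma_1'(x) \sim_\lab \sigma_2'(x)$. The Expression-Evaluation Lemma gives $n_1^{m_1} \sim_\lab n_2^{m_2}$ for the right-hand sides, and I additionally carry the hypothesis $\sigma_1(x) \sim_\lab \sigma_2(x)$, which constrains how the old pure parts $\lab_{x,1}$ and $\lab_{x,2}$ are related. I then case split on whether $\pc \sqsubseteq \lab_{x,i}$ holds in each run (four combinations, since assn-1 and assn-2 may fire asymmetrically in the two runs). When assn-2 fires in both runs the two new labels are $\star$-ed, so clause~(3) of Definition~\ref{def1} applies at once; the remaining combinations are discharged by matching the computed labels $\pc \sqcup m_i$ or $((\pc \sqcup m_i) \sqcap \lab_{x,i})\pl$ against the appropriate clause, where the side conditions of the mixed clauses~(4)--(5) are met using the clause that related $m_1$ and $m_2$. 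This is a finite, if tedious, verification.

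For \texttt{if-else} (and, analogously, \texttt{while}), the semantic restriction forces the guard to be $\star$-free, so the evaluated guards $n_1^{\lab_1}$ and $n_2^{\lab_2}$ are related by clause~(1) or clause~(2) of the value equivalence. In the clause-(1) situation both labels lie below $\lab$ and the values are equal, so the two runs take the same branch under the identical raised context $\pc \sqcup \lab_1 = \pc \sqcup \lab_2$, and the induction hypothesis applied to the strictly smaller chosen sub-derivation gives the result directly; for \texttt{while-t} the recursive loop premise is likewise a smaller derivation handled by a second use of the induction hypothesis. In the clause-(2) situation both guard labels are invisible, so $\pc \sqcup \lab_i \not\sqsubseteq \lab$ in each run, and the Confinement Lemma yields $\sigma_1 \sim_\lab \sigma_1'$ and $\sigma_2 \sim_\lab \sigma_2'$.

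The main obstacle is exactly this last, divergent case, because $\sim_\lab$ is \emph{not} transitive: I cannot simply chain $\sigma_1' \sim_\lab \sigma_1 \sim_\lab \sigma_2 \sim_\lab \sigma_2'$. Instead I will establish the needed gluing per variable, showing that $\sigma_1(x) \sim_\lab \sigma_2(x)$ together with $\sigma_1(x) \sim_\lab \sigma_1'(x)$ and $\sigma_2(x) \sim_\lab \sigma_2'(x)$ forces $\sigma_1'(x) \sim_\lab \sigma_2'(x)$. This is a case analysis over the five clauses of Definition~\ref{def1}, structured as successive quasi-transitivity steps of the kind already rehearsed in the sequencing case of the Confinement Lemma. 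The delicate sub-cases --- where one side stays $\star$-free while the other turns $\star$-ed, or where two $\star$-ed labels must be compared --- are closed using the $\star$-preservation lemma (to bound how the pure part of a $\star$-ed label can only descend under execution) together with the corollaries stating that turning a $\star$-ed label pure, or producing a pure label at all, forces $\pc \sqsubseteq \lab'$; combined with $\pc \not\sqsubseteq \lab$ these are precisely what rule out the offending label configurations. This quasi-transitive gluing is the technical heart of the argument.
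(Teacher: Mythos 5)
Your proposal is correct and follows essentially the same route as the paper's own proof: induction on the derivation, lockstep treatment via the induction hypothesis when guards are visible, and, in the divergent case, the Confinement Lemma followed by a per-variable quasi-transitive gluing of $\sigma_1' \sim_\lab \sigma_1 \sim_\lab \sigma_2 \sim_\lab \sigma_2'$ that is closed using $\star$-preservation and its corollaries to rule out the impossible label configurations. The paper's assignment case is organized with the outer split on the equivalence clause relating the right-hand-side values and the inner split on which of assn-1/assn-2 fires in each run, whereas you nest these the other way, but the content is identical.
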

\begin{proof}
By induction on the derivation and case analysis on the last step
\begin{enumerate}
 \item skip: $\sigma_1' = \sigma_1 \sim_\lab \sigma_2 = \sigma_2'$

 \item assn($x := e$): As $\sigma_1 \sim_\lab \sigma_2$, 
   $\forall x. \sigma_1(x) \sim_\lab \sigma_2(x)$. Let $\sigma_1(x) =
   v_1^{k_1}$, $\sigma_2(x) = v_2^{k_2}$ and \\ $\sigma_1'(x) =
   v_1'^{k_1'}$, $\sigma_2'(x) = v_2'^{k_2'}$ \\ s. t. $k_i = \lab_i
   \vee k_i =   \lab_i \pl  $ and $k'_i = \lab'_i
   \vee k_i =   \lab'_i \pl  $ for $i=1,2$. \\
   Let
   $\langle  e_1, \sigma_1 \rangle \Downarrow
   w_1^{k^{e}_1} \wedge \langle  e_2, \sigma_2 \rangle \Downarrow
   w_2^{k^{e}_2}$ \\ s. t. $k^{e}_i =
   \lab^{e}_i \vee k^{e}_i =   \lab^{e}_i \pl  $ for $i
   =1,2$.
      For low-equivalence of $e_1$ and $e_2$, the following cases
       arise:
       \begin{enumerate}
         
          \item $k^{e}_i = \lab^{e}_i,~s.t.~ (\lab^e_1 = \lab^e_2) = \lab^e \sqsubseteq \lab \wedge
          w_1 = w_2$:
          \begin{enumerate}
            \item $\pc \not\sqsubseteq \lab_1 \wedge \pc
              \not\sqsubseteq \lab_2$: By premise of assn rules, $k_i'
              =   ((\pc \sqcup \lab^e) \sqcap \lab_i) \pl $. By
              definition~\ref{def1}.3, $\sigma_1' \sim_\lab
              \sigma_2'$.
            \item $\pc \not\sqsubseteq \lab_1 \wedge \pc
              \sqsubseteq \lab_2$: $k_1' =   ((\pc \sqcup \lab^e) \sqcap \lab_1) \pl
               $ and $k_2' = \pc \sqcup \lab^e$. As $\lab_1'
              \sqsubseteq \lab_2'$, by
              definition~\ref{def1}.4, $\sigma_1' \sim_\lab
              \sigma_2'$.
            \item $\pc \sqsubseteq \lab_1 \wedge \pc
              \not \sqsubseteq \lab_2$: $k_2' =   ((\pc \sqcup \lab^e) \sqcap \lab_2) \pl
               $ and $k_1' = \pc \sqcup \lab^e$. As $\lab_2'
              \sqsubseteq \lab_1'$, by
              definition~\ref{def1}.5, $\sigma_1' \sim_\lab
              \sigma_2'$.
            \item $\pc \sqsubseteq \lab_1 \wedge \pc
              \sqsubseteq \lab_2$: $k_1' = \pc \sqcup \lab^e$ and
              $k_2' = \pc \sqcup \lab^e$. If $\pc \sqsubseteq \lab$
              and $\lab^e \sqsubseteq \lab$ and $w_1 = w_2$, by 
              definition~\ref{def1}.1, $\sigma_1' \sim_\lab
              \sigma_2'$. If $\pc \not\sqsubseteq \lab$, $\pc \sqcup
              \lab^e \not\sqsubseteq \lab$. By
              definition~\ref{def1}.2, $\sigma_1' \sim_\lab
              \sigma_2'$.
          \end{enumerate}
       
        \item $\lab^e_1 \not\sqsubseteq \lab \wedge
          \lab^e_2 \not\sqsubseteq \lab$: 
          From premise of assignment rules, 
          $k_1'= \pc \sqcup \lab^e_1 \vee
          k_1'=   (\pc \sqcup \lab^e_1) \pl
            \vee  k_1' =   ((\pc \sqcup \lab^e_1) \sqcap
          \lab_1) \pl  $.
          Similarly, 
          $k_2' = \pc \sqcup \lab^{e}_2 \vee
          k_2'=   (\pc \sqcup \lab^{e}_2) \pl
            \vee  k_2'=   ((\pc \sqcup \lab^e_2) \sqcap
          \lab_2) \pl  $.
          Since $\lab^{e}_1  \not\sqsubseteq \lab$ and $\lab^{e}_2
          \not\sqsubseteq \lab$, $\pc \sqcup \lab^{e}_1  \not\sqsubseteq \lab$
          and $\pc \sqcup \lab^{e}_1  \not\sqsubseteq \lab$. Therefore, from
          Definition~\ref{def1}.2, \ref{def1}.3, \ref{def1}.4 or
          \ref{def1}.5 $\sigma_1' \sim_\lab \sigma_2'$.        
 
       \item $k^{e}_i =   \lab^{e}_i \pl  $: By premise of assn rules, $k_i'
              =   ((\pc \sqcup \lab^e_i) \sqcap \lab_i) \pl  $ or $k_i'
              =   (\pc \sqcup \lab^e_i) \pl  $. By
              definition~\ref{def1}.3, $\sigma_1' \sim_\lab
              \sigma_2'$. 

        \item $k^{e}_1 =   \lab^{e}_1 \pl   \wedge
          k^{e}_2 = \lab^e_2$: 
         \begin{enumerate}
            \item $\pc \not\sqsubseteq \lab_1 \wedge \pc
              \not\sqsubseteq \lab_2$: By premise of assn rules, $k_i'
              =   ((\pc \sqcup \lab^e_i) \sqcap \lab_i) \pl  $. By
              definition~\ref{def1}.3, $\sigma_1' \sim_\lab
              \sigma_2'$.
            \item $\pc \not\sqsubseteq \lab_1 \wedge \pc
              \sqsubseteq \lab_2$: $k_1' =   ((\pc \sqcup \lab^e_1) \sqcap \lab_1) \pl
               $ and $k_2' = \pc \sqcup \lab^e_2$. From
               definition~\ref{def1}.4, $\lab^e_1 \sqsubseteq \lab^e_2$,
               so $(\pc \sqcup \lab^e_i)\sqcap
              \lab_1 \sqsubseteq \pc \sqcup \lab^e_2$. By
              definition~\ref{def1}.4, $\sigma_1' \sim_\lab
              \sigma_2'$.
            \item $\pc \sqsubseteq \lab_1 \wedge \pc
              \not \sqsubseteq \lab_2$: $k_2' =   ((\pc \sqcup
              \lab^e_2) \sqcap \lab_2) \pl 
               $ and $k_1' =   (\pc \sqcup \lab^e_1)
              \pl $. By
              definition~\ref{def1}.3, $\sigma_1' \sim_\lab
              \sigma_2'$.
            \item $\pc \sqsubseteq \lab_1 \wedge \pc
              \sqsubseteq \lab_2$: $k_1' =   (\pc \sqcup
              \lab^e_1) \pl  $ and $k_2' = \pc \sqcup
              \lab^e_2$. If $\lab^e_2 \not\sqsubseteq \lab$, so $\pc
              \sqcup \lab^e_2 \not\sqsubseteq \lab$. Else if $\lab^e_1
              \sqsubseteq \lab^e_2$, then $\pc \sqcup \lab^e_1
              \sqsubseteq \pc \sqcup \lab^e_2$. By
              definition~\ref{def1}.4, $\sigma_1' \sim_\lab
              \sigma_2'$.
          \end{enumerate}

       \item $k^{e}_1 = \lab^{e}_1 \wedge
          k^{e}_2 =   \lab^e_2 \pl  $: 
         \begin{enumerate}
            \item $\pc \not\sqsubseteq \lab_1 \wedge \pc
              \not\sqsubseteq \lab_2$: By premise of assn rules, $k_i'
              =   ((\pc \sqcup \lab^e_i) \sqcap \lab_i) \pl  $. By
              definition~\ref{def1}.3, $\sigma_1' \sim_\lab
              \sigma_2'$.
            \item $\pc \not\sqsubseteq \lab_1 \wedge \pc
              \sqsubseteq \lab_2$: $k_1' =   ((\pc \sqcup
              \lab^e_1)\sqcap \lab_1) \pl 
               $ and $k_2' =   (\pc \sqcup \lab^e_2)
              \pl $. By
              definition~\ref{def1}.3, $\sigma_1' \sim_\lab
              \sigma_2'$.
            \item $\pc \sqsubseteq \lab_1 \wedge \pc
              \not \sqsubseteq \lab_2$: $k_1' = \pc \sqcup \lab^e_1$
              and $k_2' =   ((\pc \sqcup \lab^e_2) \sqcap \lab_2) \pl
               $. $(\pc \sqcup \lab^e_2) \sqcap
              \lab_2 \sqsubseteq \pc \sqcup \lab^e_1$. By
              definition~\ref{def1}.5, $\sigma_1' \sim_\lab
              \sigma_2'$.
            \item $\pc \sqsubseteq \lab_1 \wedge \pc
              \sqsubseteq \lab_2$: $k_1' = (\pc \sqcup
              \lab^e_1) \pl $ and $k_2' =  \pc \sqcup
              \lab^e_2  $. If $\lab^e_1 \not\sqsubseteq \lab$, so $\pc
              \sqcup \lab^e_1 \not\sqsubseteq \lab$. Else if $\lab^e_2
              \sqsubseteq \lab^e_1$, then $\pc \sqcup \lab^e_2
              \sqsubseteq \pc \sqcup \lab^e_1$. By
              definition~\ref{def1}.5, $\sigma_1' \sim_\lab
              \sigma_2'$.
          \end{enumerate}
       \end{enumerate}

 \item seq: IH1: If $\sigma_1 \sim_\lab \sigma_2$ then $\sigma_1''
   \sim_\lab \sigma_2''$ \\
   IH2: If $\sigma_1'' \sim_\lab \sigma_2''$ then $\sigma_1'
   \sim_\lab \sigma_2'$ \\
   Since $\sigma_1 \sim_\lab \sigma_2$, therefore, from IH1 and IH2 
   $\sigma_1' \sim_\lab \sigma_2'$.

 \item if-else: IH: If $\sigma_1 \sim_\lab \sigma_2$, $ \langle c,
   \sigma_1  \rangle \Downarrow_{\pc \sqcup \lab^e_1} \sigma_1'$, $  \langle c,
   \sigma_2  \rangle \Downarrow_{\pc \sqcup \lab^e_2} \sigma_2'$ and
   $\pc \sqcup \lab^e_1 = \pc \sqcup \lab^e_2$ then $\sigma_1'
   \sim_\lab \sigma_2'$. 
   \begin{itemize}
     \item If $\lab^e_1 \sqsubseteq \lab$, $\lab^e_1 = \lab^e_2$ and $n_1 =
   n_2$. By IH, $\sigma_1' \sim_\lab \sigma_2'$.
   \item If $\lab^e_1 \not\sqsubseteq \lab$, then $\lab^e_2
     \not\sqsubseteq \lab$, $\pc \sqcup \lab^e_i \not\sqsubseteq
   \lab$ for $i =1,2$. By Lemma~\ref{conf}, $\sigma_1 \sim_\lab \sigma_1'$ and $\sigma_2
   \sim_\lab \sigma_2'$.
   T.S. $\sigma_1' \sim_\lab \sigma_2'$, i.e., $(\forall x. \sigma_1'(x) \sim_\lab \sigma_2'(x))$\\
   Case analysis on the definition of low-equivalence of values, $x$, in $\sigma_1$ and
$\sigma_2$. Let $\sigma_1(x) = v_1^{k_1}$ and $\sigma_2(x) =
v_2^{k_2}$ and $\sigma_1'(x) = v_1'^{k_1'}$ and $\sigma_2'(x) =
v_2'^{k_2'}$
\begin{enumerate}
\item $(k_1 = k_2) = \lab' \sqsubseteq \lab ~\wedge~v_1 = v_2 = v$:
 \begin{itemize}
\item   If $k_1' = \lab_1' \wedge k_2' = \lab_2'$, then as $\sigma_1
  \sim_\lab \sigma_1'$ and $\sigma_2 \sim_\lab \sigma_2'$, by
  definition~\ref{def1}.1, $\lab' = \lab_1' \wedge v = v_1'$ and
  $\lab' = \lab_2' \wedge v = v_2'$. Thus, $\lab_1' = \lab_2' \wedge
  v_1' = v_2'$, so $\sigma_1'(x)  \sim_\lab \sigma_2'(x)$.
\item  If $k_1' =  \lab_1'\pl   ~\wedge~k_2' = \lab_2'$, then as $\sigma_1
  \sim_\lab \sigma_1'$ and $\sigma_2 \sim_\lab \sigma_2'$, by
  definition~\ref{def1}.5 $\lab_1' \sqsubseteq \lab_1 = \lab'$ and by
  definition~\ref{def1}.1 
  $k_2' = \lab_2' =\lab_2 = \lab'$. So, $\lab_1' \sqsubseteq \lab_2'$.
  By definition~\ref{def1}.4, $\sigma_1'(x)
  \sim_\lab \sigma_2'(x)$.
\item If $k_1' = \lab_1' ~\wedge~k_2' =  \lab_2'\pl  $, then as $\sigma_1
  \sim_\lab \sigma_1'$ and $\sigma_2 \sim_\lab \sigma_2'$,by
  definition~\ref{def1}.1 $k_1' = \lab_1' =\lab_1 = \lab'$ and by
  definition~\ref{def1}.5 $\lab_2' \sqsubseteq \lab_2 = \lab'$. So,
  $\lab_2' \sqsubseteq \lab_1'$. 
  By definition~\ref{def1}.5, $\sigma_1'(x)
  \sim_\lab \sigma_2'(x)$.
\item If $k_1' =  \lab_1'\pl   ~\wedge~k_2' =
 \lab_2'\pl  $, then by definition~\ref{def1}.3,
$\sigma_1'(x)   \sim_\lab \sigma_2'(x)$.
 \end{itemize}

\item $(k_1 = \lab_1 \not\sqsubseteq \lab)\wedge (k_2 = \lab_2
  \not\sqsubseteq \lab)$:
 \begin{itemize}
\item  If $k_1' = \lab_1' \wedge k_2' = \lab_2'$, then as $\sigma_1
  \sim_\lab \sigma_1'$ and $\sigma_2 \sim_\lab \sigma_2'$, by
  definition~\ref{def1}.2, $(k_1' = \lab_1' \not\sqsubseteq
  \lab)\wedge (k_2' = \lab_2' \not\sqsubseteq \lab)$. So,
  $\sigma_1'(x)  \sim_\lab \sigma_2'(x)$.
 \item If $k_1' =  \lab_1'\pl   ~\wedge~k_2' = \lab_2'$, then as $\sigma_1
  \sim_\lab \sigma_1'$ and $\sigma_2 \sim_\lab \sigma_2'$, by
  definition~\ref{def1}.2  $k_2' = \lab_2' \not\sqsubseteq \lab$.
  By definition~\ref{def1}.4, $\sigma_1'(x)
  \sim_\lab \sigma_2'(x)$.
\item If $k_1' = \lab_1' ~\wedge~k_2' =  \lab_2'\pl  $, then as $\sigma_1
  \sim_\lab \sigma_1'$ and $\sigma_2 \sim_\lab \sigma_2'$,by
  definition~\ref{def1}.2 $k_1' = \lab_1' \not\sqsubseteq \lab$. 
  By definition~\ref{def1}.5, $\sigma_1'(x)
  \sim_\lab \sigma_2'(x)$.
If $k_1' =  \lab_1'\pl   ~\wedge~k_2' =
 \lab_2'\pl  $, then by definition~\ref{def1}.3,
$\sigma_1'(x)   \sim_\lab \sigma_2'(x)$.
 \end{itemize}

\item $(k_1 =   \lab_1\pl   ~\wedge~k_2 =  
  \lab_2\pl  )$ :
 \begin{itemize}
\item If $k_1' =  \lab_1'\pl   ~\wedge~k_2' =
 \lab_2'\pl  $, by definition~\ref{def1}.3,
$\sigma_1'(x)   \sim_\lab \sigma_2'(x)$.
\item If $k_1' = \lab_1' ~\wedge~k_2' =  \lab_2'\pl  $, then as $\sigma_1
  \sim_\lab \sigma_1'$ and $\sigma_2 \sim_\lab \sigma_2'$,by 
corollary~\ref{cor2}, $\pc \sqcup \lab^e_1 \sqsubseteq \lab_1'$. As $\pc \sqcup \lab^e_1 \not\sqsubseteq \lab$ and
  by  definition~\ref{def1}.2,
  $\lab_1' \not\sqsubseteq \lab$.
  By definition~\ref{def1}.5, $\sigma_1'(x)
  \sim_\lab \sigma_2'(x)$.
\item  If $k_1' =  \lab_1'\pl   ~\wedge~k_2' = \lab_2'$, then as $\sigma_1
  \sim_\lab \sigma_1'$ and $\sigma_2 \sim_\lab \sigma_2'$, by 
corollary~\ref{cor2}, $\pc \sqcup \lab^e_2 \sqsubseteq \lab_2'$. As $\pc \sqcup
  \lab^e_2 \not\sqsubseteq \lab$ and
  by definition~\ref{def1}.2,
  $\lab_2' \not\sqsubseteq \lab$.
  By definition~\ref{def1}.4, $\sigma_1'(x)
  \sim_\lab \sigma_2'(x)$.
\item If $k_1' = \lab_1' \wedge k_2' = \lab_2'$, then as $\sigma_1
  \sim_\lab \sigma_1'$ and $\sigma_2 \sim_\lab \sigma_2'$, by 
  corollary~\ref{cor2}, $\pc \sqcup \lab^e_1 \sqsubseteq \lab_1'$ and $\pc \sqcup \lab^e_2 \sqsubseteq \lab_2'$. As
  $\pc \sqcup \lab^e_i \not\sqsubseteq \lab$ and by definition~\ref{def1}.2, $\lab_1'
  \not\sqsubseteq \lab$ and 
  $\lab_2' \not\sqsubseteq \lab$.
  By definition~\ref{def1}.2, $\sigma_1'(x)
  \sim_\lab \sigma_2'(x)$.
 \end{itemize}

\item $(k_1 =   \lab_1 \pl   ~\wedge~k_2 = 
  \lab_2)$: 
\begin{itemize}
\item $\lab_2 \not\sqsubseteq \lab$ : 
 \begin{itemize}
\item  If $k_1' =  \lab_1'\pl   ~\wedge~k_2' =
 \lab_2'\pl  $, by definition~\ref{def1}.3,
$\sigma_1'(x)   \sim_\lab \sigma_2'(x)$.
\item If $k_1' = \lab_1' ~\wedge~k_2' =  \lab_2'\pl  $, then as $\sigma_1
  \sim_\lab \sigma_1'$ and $\sigma_2 \sim_\lab \sigma_2'$,by
  corollary~\ref{cor2},  $\pc \sqcup \lab^e_1\sqsubseteq \lab_1'$. As $\pc \sqcup \lab^e_1 \not\sqsubseteq \lab$ and by
  definition~\ref{def1}.2, $\lab_1' \not\sqsubseteq \lab$.  By
  definition~\ref{def1}.5, $\sigma_1'(x) 
  \sim_\lab \sigma_2'(x)$.
\item  If $k_1' =  \lab_1'\pl   ~\wedge~k_2' = \lab_2'$, then as $\sigma_1
  \sim_\lab \sigma_1'$ and $\sigma_2 \sim_\lab \sigma_2'$, by
  definition~\ref{def1}.2, $\lab_2' \not\sqsubseteq \lab$.
  By definition~\ref{def1}.4, $\sigma_1'(x)
  \sim_\lab \sigma_2'(x)$.
\item If $k_1' = \lab_1' \wedge k_2' = \lab_2'$, then as $\sigma_1
  \sim_\lab \sigma_1'$ and $\sigma_2 \sim_\lab \sigma_2'$, by
  corollary~\ref{cor2},  $\pc \sqcup \lab^e_1\sqsubseteq \lab_1'$. As $\pc \sqcup \lab^e_1 \not\sqsubseteq \lab$ and by
  definition~\ref{def1}.2, $\lab_1' \not\sqsubseteq \lab$. By
  definition~\ref{def1}.2, $\lab_2' \not\sqsubseteq \lab$.
  By definition~\ref{def1}.2, $\sigma_1'(x)
  \sim_\lab \sigma_2'(x)$.
 \end{itemize}
\item $\lab_1 \sqsubseteq \lab_2 \sqsubseteq \lab$ : 
 \begin{itemize}
\item  If $k_1' =  \lab_1'\pl   ~\wedge~k_2' =
 \lab_2'\pl  $, by definition~\ref{def1}.3,
$\sigma_1'(x)   \sim_\lab \sigma_2'(x)$.
\item If $k_1' = \lab_1' ~\wedge~k_2' =  \lab_2'\pl  $, then as $\sigma_1
  \sim_\lab \sigma_1'$ and $\sigma_2 \sim_\lab \sigma_2'$, by
  corollary~\ref{cor2},  $\pc \sqcup \lab^e_1 \sqsubseteq \lab_1'$. As $\pc \sqcup
  \lab^e_1 \not\sqsubseteq \lab$, and by definition ~\ref{def1}.2,
  $\lab_1' \not\sqsubseteq \lab$. By definition~\ref{def1}.5,
  $\sigma_1'(x) \sim_\lab \sigma_2'(x)$.
\item  If $k_1' =  \lab_1'\pl   ~\wedge~k_2' = \lab_2'$, then as $\sigma_1
  \sim_\lab \sigma_1'$ and $\sigma_2 \sim_\lab \sigma_2'$, $\lab_1' \sqsubseteq
  (\pc \sqcup \lab^e_1)\sqcap \lab_1$ as $\pc \sqcup \lab^e_1\not\sqsubseteq \lab_1$ and $\lab_2' =
  \lab_2$ by corollary~\ref{cor1} and definition~\ref{def1}.1. Thus,
  $\lab_1'\sqsubseteq \lab_2'$.  By definition~\ref{def1}.4, $\sigma_1'(x)
  \sim_\lab \sigma_2'(x)$.
\item If $k_1' = \lab_1' \wedge k_2' = \lab_2'$, then as $\sigma_1
  \sim_\lab \sigma_1'$ and $\sigma_2 \sim_\lab \sigma_2'$, by 
  corollary~\ref{cor1}, $\pc \sqcup \lab^e_1\sqsubseteq \lab_1$. As $\pc \sqcup
  \lab^e_1 \not\sqsubseteq \lab$, by 
  contradiction the case does not hold.
 \end{itemize}
\end{itemize}

\item $(k_1 = \lab_1 ~\wedge~k_2 = 
    \lab_2 \pl  )$:
\begin{itemize}
 \item $\lab_1 \not\sqsubseteq \lab$ : 
\begin{itemize}
\item  If $k_1' =  \lab_1'\pl   ~\wedge~k_2' =
 \lab_2'\pl  $, by definition~\ref{def1}.3,
$\sigma_1'(x)   \sim_\lab \sigma_2'(x)$.
\item If $k_1' =   \lab_1'\pl  ~\wedge~k_2' = \lab_2' $, then as $\sigma_1
  \sim_\lab \sigma_1'$ and $\sigma_2 \sim_\lab \sigma_2'$,by 
corollary~\ref{cor2}, $\pc \sqcup \lab^e_2 \sqsubseteq \lab_2'$. As $\pc \sqcup \lab^e_2 \not\sqsubseteq \lab$ and by
  definition~\ref{def1}.2, $\lab_2' \not\sqsubseteq \lab$.  By
  definition~\ref{def1}.5, $\sigma_1'(x) 
  \sim_\lab \sigma_2'(x)$.
\item  If $k_1' = \lab_1' ~\wedge~k_2' =  
 \lab_2' \pl  $, then as $\sigma_1
  \sim_\lab \sigma_1'$ and $\sigma_2 \sim_\lab \sigma_2'$, by
  definition~\ref{def1}.2, $\lab_1' \not\sqsubseteq \lab$.
  By definition~\ref{def1}.4, $\sigma_1'(x)
  \sim_\lab \sigma_2'(x)$.
\item If $k_1' = \lab_1' \wedge k_2' = \lab_2'$, then as $\sigma_1
  \sim_\lab \sigma_1'$ and $\sigma_2 \sim_\lab \sigma_2'$, by 
  corollary~\ref{cor2}, $\pc \sqcup \lab^e_2\sqsubseteq \lab_2'$. As $\pc \sqcup \lab^e_2\not\sqsubseteq \lab$ and by
  definition~\ref{def1}.2, $\lab_2' \not\sqsubseteq \lab$. By
  definition~\ref{def1}.2, $\lab_1' \not\sqsubseteq \lab$.
  By definition~\ref{def1}.2, $\sigma_1'(x)
  \sim_\lab \sigma_2'(x)$.
\end{itemize}
\item $\lab_2 \sqsubseteq \lab_1$ :
\begin{itemize}
\item If $k_1' =  \lab_1'\pl   ~\wedge~k_2' =
 \lab_2'\pl  $, by definition~\ref{def1}.3,
$\sigma_1'(x)   \sim_\lab \sigma_2'(x)$.
\item If $k_1' = \lab_1' ~\wedge~k_2' =  \lab_2'\pl  $, then as $\sigma_1
  \sim_\lab \sigma_1'$ and $\sigma_2 \sim_\lab \sigma_2'$, $\lab_2' \sqsubseteq
  (\pc \sqcup \lab^e_2)\sqcap \lab_2$ as $\pc \sqcup \lab^e_2\not\sqsubseteq \lab_2$ and $\lab_1' =
  \lab_1$ by corollary~\ref{cor1} and definition~\ref{def1}.1. Thus,
  $\lab_2'\sqsubseteq \lab_1'$.  By definition~\ref{def1}.5, $\sigma_1'(x)
  \sim_\lab \sigma_2'(x)$.
 \item If $k_1' =  \lab_1'\pl   ~\wedge~k_2' = \lab_2'$, then as $\sigma_1
  \sim_\lab \sigma_1'$ and $\sigma_2 \sim_\lab \sigma_2'$, by
  corollary~\ref{cor2},  $\pc \sqcup \lab^e_2\sqsubseteq \lab_2'$. As $\pc \sqcup \lab^e_2\not\sqsubseteq \lab$, and by
  definition~\ref{def1}.2, $\lab_2' \not\sqsubseteq \lab$.  By
  definition~\ref{def1}.4, $\sigma_1'(x) 
  \sim_\lab \sigma_2'(x)$.
\item If $k_1' = \lab_1' \wedge k_2' = \lab_2'$, then as $\sigma_1
  \sim_\lab \sigma_1'$ and $\sigma_2 \sim_\lab \sigma_2'$, by
  corollary~\ref{cor1},  $\pc \sqcup \lab^e_2\sqsubseteq \lab_2$. As $\pc \sqcup \lab^e_2\not\sqsubseteq \lab$, by
  contradiction the case does not hold.
  \end{itemize}
 \end{itemize}
\end{enumerate}
\end{itemize}

 \item while-t:  IH1: If $\sigma_1 \sim_\lab \sigma_2$, $ \langle c,
   \sigma_1   \rangle\Downarrow_{\pc \sqcup \lab^e_1} \sigma_1''$, $ \langle c,
   \sigma_2 \rangle \Downarrow_{\pc \sqcup \lab^e_2} \sigma_2''$ and
   $\pc \sqcup \lab^e_1 = \pc \sqcup \lab^e_2$ then $\sigma_1''
   \sim_\lab \sigma_2''$.\\
   IH2: If $\sigma_1'' \sim_\lab \sigma_2''$, $ \langle c,
   \sigma_1'' \rangle \Downarrow_{\pc \sqcup \lab^e_1} \sigma_1'$, $\langle c,
   \sigma_2'' \rangle \Downarrow_{\pc \sqcup \lab^e_2} \sigma_2'$ and
   $\pc \sqcup \lab^e_1 = \pc \sqcup \lab^e_2$ then $\sigma_1'
   \sim_\lab \sigma_2'$.
   \begin{itemize}
     \item If $\lab_1^e \sqsubseteq \lab$, $\lab_1^e = \lab^e_2$ and
       $n_1=n_2$. By IH1 and IH2, $\sigma_1' \sim_\lab \sigma_2'$.
      \item If $\lab^e_1 \not\sqsubseteq \lab$, then $\lab^e_2
        \not\sqsubseteq \lab$, $\pc \sqcup \lab^e_i \not\sqsubseteq
        \lab$ for $i = 1,2$. By Lemma~\ref{conf}, $\sigma_1 \sim_\lab
        \sigma_1''$ and $\sigma_2 \sim_\lab \sigma_2''$. \\
        T.S. $\sigma_1'' \sim_\lab \sigma_2''$: By similar reasoning
        as if-else.\\
        As $\sigma_1'' \sim_\lab \sigma_2''$, and by Lemma~\ref{conf}, $\sigma_1'' \sim_\lab
        \sigma_1'$ and $\sigma_2'' \sim_\lab \sigma_2'$. \\
        T.S. $\sigma_1' \sim_\lab \sigma_2'$: By similar reasoning
        as if-else.
    \end{itemize}

  \item while-f: $\sigma_1' = \sigma_1 \sim_\lab \sigma_2 = \sigma_2'$
\end{enumerate}

\end{proof}
% Acknowledgments, if needed.

% % The bibliography should be embedded for final submission.

% \begin{thebibliography}{}
% \softraggedright

% \bibitem[Smith et~al.(2009)Smith, Jones]{smith02}
% P. Q. Smith, and X. Y. Jones. ...reference text...

% \end{thebibliography}

\end{document}